\documentclass[reqno,11pt]{amsart}
\usepackage{color}
\IfFileExists{mymtpro2.sty}{%
  \usepackage[subscriptcorrection]{mymtpro2}
}{}

\usepackage{a4,amssymb}
\usepackage{graphicx}

%

\marginparwidth1.5cm
\marginparsep.5cm

%

%

\newtheorem{theorem}{Theorem}


\newtheorem{thm}{Theorem}[section]
\newtheorem{lemma}{Lemma}[section]

\newtheorem{pr}{Proposition}[section]
\newtheorem{defn}{Definition}[section]

\newcommand{\bel}{\begin{equation} \label}
\newcommand{\ee}{\end{equation}}

\theoremstyle{remark}
\newtheorem{remark}[theorem]{Remark}
\newtheorem{myremarks}[theorem]{Remarks}

    {\end{nummer}\end{myremarks}}

\newcounter{numcount}
\newcommand{\labelnummer}{\mbox{\normalfont (\roman{numcount})}}%

\makeatletter

\newenvironment{nummer}%
  {\let\curlabelspeicher\@currentlabel%
    \begin{list}{\labelnummer}%
      {\usecounter{numcount}\leftmargin0pt%
        \topsep0.5ex\partopsep2ex\parsep0pt\itemsep0ex\@plus1\p@%
        \labelwidth2.5em\itemindent3.5em\labelsep1em%
      }%
    \let\saveitem\item%
    \def\item{\saveitem%
      \def\@currentlabel{{\upshape\curlabelspeicher}$\,$\labelnummer}}%
    \let\savelabel\label%
    \def\label##1{\savelabel{##1}%
      \@bsphack%
        \ifmmode\else%
          \protected@write\@auxout{}%
          {\string\newlabel{##1item}{{\labelnummer}{\thepage}}}%
        \fi%
      \@esphack%
    }%
  }{\end{list}}%

\renewcommand{\appendix}{\def\thesection{\textsc{Appendix}}}


 \let\leq\le
 \let\geq\ge

\DeclareMathOperator{\tr}{tr\kern1pt}



\newcommand{\Ran}{\mathop\mathrm{Ran}\nolimits}

\newcommand{\supp}{\mathop\mathrm{supp}\nolimits}

\newcommand{\N}{{\mathbb N}}

%

\makeatletter

\newif\ifper\pertrue
\def\per{.}

\def\bti{\@ifnextchar[\bbti\bbbti}
\def\bbti[#1]#2{#2, #1.}
\def\bbbti#1{#1.}

\def\z{\@ifnextchar[\zz\zzz}
\def\zz[#1]#2#3#4#5{\perfalse\emph{#2} \textbf{#3}, #4 (#5) [#1]}
\def\zzz#1#2#3#4{\emph{#1} \textbf{#2}, #3 (#4)\ifper\per\fi\pertrue}

\def\pub{\@ifstar\pubstar\pubnostar}
\def\pubnostar{\@ifnextchar[\@@pubnostar\@pubnostar}
\def\@@pubnostar[#1]#2#3#4{#2, #3, #4, #1\ifper\per\fi\pertrue}
\def\@pubnostar#1#2#3{#1, #2, #3\ifper\per\fi\pertrue}
\def\pubstar[#1]#2#3#4{\perfalse #2, #3, #4 [#1]\pertrue}

\makeatother

 \sloppy


\newcommand{\beq}{\begin{equation}}
\newcommand{\eeq}{\end{equation}}
\newcommand{\ba}{\begin{array}}
\newcommand{\ea}{\end{array}}
\newcommand{\bea}{\begin{eqnarray}}
\newcommand{\eea}{\end{eqnarray}}
\newcommand{\beas}{\begin{eqnarray*}}
\newcommand{\eeas}{\end{eqnarray*}}

{

\newcommand{\R}{\mathbb{R}}

\newcommand{\T}{\mathbb{T}}

\newcommand{\Z}{\mathbb{Z}}

\newcommand{\C}{\mathbb{C}}



\def\P{I\kern-.30em{P}}
\def\E{I\kern-.30em{E}}
\renewcommand{\E}{\mathbb{E}\mkern2mu}
\renewcommand{\P}{\mathbb{P}}

\begin{document}

\title[Decorrelation estimates for non rank one perturbations]{Decorrelation estimates for random Schr\"odinger operators with non rank one perturbations}

\author[P.\ D.\ Hislop]{Peter D.\ Hislop}
\address{Department of Mathematics,
    University of Kentucky,
    Lexington, Kentucky  40506-0027, USA}
\email{peter.hislop@uky.edu}

\author[M.\ Krishna]{M.\ Krishna}
\address{Ashoka University, Rai,
Haryana, India}
\email{krishna.maddaly@ashoka.edu.in}


\author[C.\ Shirley]{C.\ Shirley}
\address{Universit\'e Libre de Bruxelles, Belgique}
\email{cshirley@ulb.ac.be}

\thanks{PDH was partially supported by NSF through grant DMS-1103104 during the time some of this work was done.
MK was partially supported by IMSc Project 12-R\&D-IMS-5.01-0106. The authors thank F.\ Klopp for discussions on eigenvalue statistics and decorrelation estimates.}


\begin{abstract}
   We prove decorrelation estimates for generalized lattice Anderson models on $\Z^d$ constructed with finite-rank perturbations in the spirit of Klopp \cite{klopp}. These are applied to prove that the local eigenvalue statistics $\xi^\omega_{E}$ and
   $\xi^\omega_{E^\prime}$, associated with two energies $E$ and $E'$ in the localization region and satisfying $|E - E'| > 4d$, are independent. That is, if $I,J$ are two bounded intervals, the random variables $\xi^\omega_{E}(I)$ and $\xi^\omega_{E'}(J)$, are independent and distributed according to a compound Poisson distribution whose L\'evy measure has finite support. We also prove that the extended Minami estimate implies that the eigenvalues in the localization region have multiplicity at most the rank of the perturbation. The method of proof contains new ingredients that simplify the proof of the rank one case \cite{klopp,shirley,trinh}, extends to models for which the eigenvalues are degenerate, and applies to models for which the potential is not sign definite \cite{tautenhahn-veselic1} in dimensions $d \geq 1$.
 \end{abstract}

\maketitle \thispagestyle{empty}



\tableofcontents


{\bf  AMS 2010 Mathematics Subject Classification:} 35J10, 81Q10,
35P20\\
{\bf  Keywords:}
random Schr\"odinger operators, eigenvalue statistics, decorrelation estimates, independence, Minami estimate, compound Poisson distribution \\


\section{Statement of the problem and results}\label{sec:introduction}
\setcounter{equation}{0}

We consider random Schr\"odinger operators $H^\omega = {\mathcal L} + V_\omega$ on the lattice Hilbert space $\ell^2 (\Z^d)$ (or, for matrix-valued potentials, on $\ell^2 (\Z^d) \otimes \C^{m_k}$),
and prove that certain natural random variables
associated with the local eigenvalue statistics around two distinct energies $E$ and $E^\prime$, in the region of complete localization $\Sigma_{\rm CL}$ and with $| E - E^\prime | > 4d$, are independent. From previous work \cite{hislop-krishna1}, these random variables
distributed according to a compound Poisson distribution.
The operator ${\mathcal L}$ is the discrete Laplacian on $\Z^d$, although this can be generalized.
For these lattice models, the random potential $V_\omega$ has the form
\beq\label{eq:potential1}
(V_\omega f)(j) = \sum_{i \in {\mathcal J}} \omega_i  (P_i f)(j),
\eeq
where $\{ P_i \}_{i \in {\mathcal{J}}}$ is a family of finite-rank projections with the same rank $m_k \geq 1$, the set $\mathcal{J}$ is a sublattice of $\Z^d$, and $\sum_{i \in {\mathcal{J}}} P_i = I$. We assume that $P_i = U_i P_0 U_i^{-1}$, for all $i \in \mathcal{J}$,
where $U_i$ is the unitary implementation of the translation group $(U_i f)(k) = f(k+i)$, for $i, k \in \Z^d$.
The coefficients $\{ \omega_i \}$ are a family of independent, identically distributed (iid) random variables with a bounded density of compact support on a product probability space $\Omega$ with probability measure $\P$.
It follows from the conditions above that the family of random Schr\"odinger operators $H^\omega$ is ergodic with respect to the translations generated by the sublattice $\mathcal{J}$.

One example on the lattice is the polymer model. For this model, the projector $P_i = \chi_{\Lambda_k(i)}$ is the characteristic function
on the cube $\Lambda_k(i)$ of side length $2k \in \N$ centered at $i \in \Z^d$. The rank of $P_i$ is $(2k+1)^d$
and the set $\mathcal{J}$ is chosen so that $\cup_{i \in \mathcal{J}} {\Lambda_k(i)} = \Z^d$.
Another example is a matrix-valued model for which $P_i$, $i \in \Z^d$, projects onto the $m_k$-dimensional subspace $C^{m_k}$,
 and $\mathcal{J} = \Z^d$. The corresponding Schr\"odinger operator is
\beq\label{eq:model1}
H^\omega = {\mathcal L} +  \sum_{i \in {\mathcal{J}}} \omega_i  P_i ,
\eeq
where ${\mathcal L}$ is the discrete lattice Laplacian $\Delta$ on $\ell^2 (\Z^d)$,
or $\Delta \otimes I$ on $\ell^2(\Z^d) \otimes \C^{m_k}$ or, more generally, $\Delta \otimes A$, where $A$ is a hermitian positive-definite $m_k \times m_k$ matrix), respectively.
In the following, we denote by $H_{\omega, \ell}$ (or simply as
$H_\ell$ omitting the $\omega$) the matrices $\chi_{\Lambda_\ell}H^\omega\chi_{\Lambda_\ell}$ and similarly $H_{\omega,L}, H_L$ by replacing $\ell$ with $L$, for positive integers $\ell$ and $L$.

A lot is known about the eigenvalue statistics for random Schr\"odinger operators on $\ell^2 (\R^d)$.
When the projectors $P_i$ are rank one projectors, the local eigenvalue statistics in the localization regime has been proved to be given by a Poisson process by Minami \cite{minami1} (see also Molchanov \cite{molchanov} for a model on $\R$ and Germinet-Klopp \cite{germinet-klopp} for a comprehensive discussion and additional results). For the non rank one case, Tautenhahn and Veseli\'c \cite{tautenhahn-veselic1} proved a Minami estimate for certain models that may be described as weak perturbations of the rank one case. The general non finite rank case was studied by the first two authors in \cite{hislop-krishna1} who proved that, roughly speaking, the local eigenvalue statistics in the localization regime are compound Poisson point processes. This result also holds for random Schr\"odinger operators on $\R^d$.

In this paper, we further refine these results for lattice models with non rank one projections and prove, roughly speaking, that the processes associated with two energies are independent. The method applies to Schr\"odinger operators with non simple eigenvalues, see the discussion at the end of section \ref{sec:multiplicity1}. Klopp \cite{klopp} proved decorrelation estimates for rank one lattice models in any dimension. He applied them to show that the local eigenvalue point processes at distinct energies converge to independent Poisson processes (in dimensions $d \>1$ the energies need to be far apart as is the case for the models studied here).
Shirley \cite{shirley} extended the family of one-dimensional lattice models for which the decorrelation estimate may be proved to include alloy-type models with correlated random variables, hopping models, and certain one-dimensional quantum graphs.

One of the advantages of the methods employed in this paper is that monotonicity is no longer needed. Consequently, we can treat the almost rank one models for which the potential is not sign definite. A class of such models was considered by Tautenhahn and Veseli\`c \cite{tautenhahn-veselic1} who proved a Minami estimate.

\subsection{Asymptotic independence and decorrelation estimates}\label{subsec:decorrelation1}

The main result is the asymptotic independence of random variables associated with the local eigenvalue statistics
centered at two distinct energies $E$ and $E^\prime$ satisfying $|E-E^\prime | > 4d$.

 We note that in one-dimension there are stronger results and the condition $|E-E^\prime | > 4d$ is not needed. Our results are inspired by the work of Klopp \cite{klopp} for the Anderson models on $\Z^d$ and of Shirley \cite{shirley} for related models on $\Z^d$.
The condition $|E - E^\prime | > 4d$ requires that the two energies be fairly far apart. For example, if $\omega_0 \in [-K, K]$ so that the deterministic spectrum $\Sigma = [ -2d - K, 2d + K]$, the region of complete localization $\Sigma_{\rm CL}$ is near the band edges $\pm ( 2d + K)$. In this case, one can consider $E$ and $E^\prime$ near each of the band edges.
Our main result on asymptotic independence is the following theorem.

\begin{thm}\label{thm:decorrelation-lattice1}
Let $E,E' \in \Sigma_{\rm CL}$ be two distinct energies with $|E - E'| > 4d$. Let $\xi_{\omega, E}$, respectively, $\xi_{\omega, E'}$,  be a limit point of the local eigenvalue statistics centered at $E$, respectively, at $E'$. Then these two processes are independent. That is, for any bounded intervals $I,J \in \mathcal{B}(\R)$, the random variables $\xi_{\omega, E}(I)$ and $\xi_{\omega, E'}(J)$ are independent random variables distributed according to a compound Poisson process.
\end{thm}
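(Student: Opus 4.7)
The plan is to follow the general strategy pioneered by Klopp \cite{klopp} for the rank one Anderson model and to adapt each ingredient to the finite-rank setting of \eqref{eq:model1}. That each individual limiting process is compound Poisson is already established in \cite{hislop-krishna1}; the genuine content of Theorem \ref{thm:decorrelation-lattice1} is the \emph{factorization} of the joint law. A standard Laplace-transform argument, which partitions the macroscopic cube $\Lambda_L$ into disjoint microscopic cubes of side $\ell(L)\sim(\log L)^{1/\xi}$ (the Minami scale used in \cite{hislop-krishna1}) and exploits independence of the restricted finite-volume operators across these cubes, reduces the independence of $\xi_{\omega,E}$ and $\xi_{\omega,E'}$ to the finite-volume \emph{decorrelation estimate}
\[
\P\bigl(\sigma(H_\ell)\cap I_L(E)\neq\emptyset,\ \sigma(H_\ell)\cap I_L(E')\neq\emptyset\bigr)=o\bigl((\ell/L)^{2d}\bigr),
\]
where $I_L(E), I_L(E')$ are intervals of length $\sim L^{-d}$ around the two energies.

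To obtain this estimate I would first use the extended Minami estimate announced in the paper to discard, at negligible cost, the event that either $I_L(E)$ or $I_L(E')$ contains more than $m_k$ eigenvalues of $H_\ell$, reducing the problem to bounding $\P(E_j(\omega)\in I_L(E),\ E_k(\omega)\in I_L(E'))$ for a polynomial-in-$\ell$ number of eigenvalue index pairs $(j,k)$. First-order perturbation theory gives
\[
\partial_{\omega_i}E_j(\omega)=\langle\phi_j,P_i\phi_j\rangle\in[0,1],\qquad \sum_{i\in\mathcal{J}}\partial_{\omega_i}E_j(\omega)=\|\phi_j\|^2=1.
\]
Since $\|\mathcal{L}\|\le 2d$, expressing $(E_j-E_k)(\omega)-(E_j-E_k)(0)$ as the line integral of $\nabla_\omega(E_j-E_k)$ along the segment from $0$ to $\omega$ shows that if the two gradient vectors $(\partial_{\omega_i}E_j)_i$ and $(\partial_{\omega_i}E_k)_i$ were collinear throughout this segment, then $|E_j(\omega)-E_k(\omega)|$ would be controlled by an eigenvalue gap of $\mathcal{L}_\ell$ and hence at most $4d$. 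The strict inequality $|E-E'|>4d$ therefore forces a quantitative non-collinearity: there should exist two indices $i_1\neq i_2$ and a constant $c>0$ depending only on $|E-E'|-4d$ such that, on a high-probability event,
\[
\bigl|\det\bigl(\partial_{\omega_{i_a}}E_b\bigr)_{a,b=1,2}\bigr|\ge c.
\]

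Once this Jacobian lower bound is in place, freezing all $\omega_i$ with $i\notin\{i_1,i_2\}$, using the bounded density of the two remaining couplings and changing variables $(\omega_{i_1},\omega_{i_2})\mapsto(E_j,E_k)$ yields $\P(E_j\in I_L(E),E_k\in I_L(E'))\le C|I_L(E)||I_L(E')|\le C'L^{-2d}$, and summation over admissible pairs $(j,k)$ and index pairs $(i_1,i_2)$ delivers the required $o((\ell/L)^{2d})$ decay. Inserting this into the Laplace-transform reduction yields the asymptotic independence together with the compound Poisson distribution inherited from \cite{hislop-krishna1}. The principal obstacle will be the quantitative Jacobian estimate: in the rank one case $\partial_{\omega_i}E_j=|\phi_j(i)|^2$, and collinearity reduces cleanly to $\phi_j=\phi_k$ up to phases, whence $|E_j-E_k|\le 4d$ is immediate; for $P_i$ of rank $m_k>1$ each $\partial_{\omega_i}E_j$ is only a quadratic form on the $P_i$-subspace, so the non-collinearity has to be extracted by a more delicate linear-algebraic argument. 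This argument must also accommodate degenerate eigenvalues, which are genuinely present here and which one handles by working with spectral projectors onto clusters of size at most $m_k$ (controlled by the extended Minami estimate), and the lack of sign-definiteness of the potential as in \cite{tautenhahn-veselic1}, which forces one to avoid any use of monotonicity in the perturbation.
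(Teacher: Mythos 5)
Your outline follows the same route as the paper: reduce independence of the limit processes to a finite-volume two-interval decorrelation estimate via an array of independent local eigenvalue processes, use the extended Minami estimate to restrict to at most $m_k$ eigenvalues per interval, extract quantitative non-collinearity of the two $\omega$-gradients from $|E-E'|>4d$ together with $\|\mathcal{L}\|\le 2d$, and conclude by a two-variable change of variables against the bounded density. Three steps as written, however, have genuine gaps.

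First, your core computation is carried out for individual eigenvalues $E_j(\omega)$, $E_k(\omega)$, which in this non rank one setting need not be simple, so the Feynman--Hellmann formula and the line-integral argument are not directly available; your closing remark that one should ``work with spectral projectors onto clusters'' is exactly the missing construction, not an afterthought. The paper replaces $E_j$ by the cluster mean $\mathcal{T}_\ell(E,k_1,\omega)=k_1^{-1}\mathrm{Tr}\big(H_{\omega,\ell}\,E_{H_{\omega,\ell}}(I_L(E))\big)$, which is analytic in $\omega$ (away from level crossings with the complement of the cluster), lies in $I_L(E)$, has nonnegative $\ell^1$-normalized gradient $\partial_{\omega_j}\mathcal{T}=k_1^{-1}\mathrm{Tr}(P_E P_j)$, and satisfies the Euler identity $\omega\cdot\nabla_\omega(\mathcal{T}-\mathcal{T}')=(\mathcal{T}-\mathcal{T}')+(\text{traces of }\mathcal{L}\text{ against }P_E,P_{E'})$, yielding $\|\nabla_\omega(\mathcal{T}-\mathcal{T}')\|_1\ge(\Delta E-4d)/K$; all your subsequent steps must be run on these averages. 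Second, the claimed Jacobian lower bound by a constant $c>0$ depending only on $|E-E'|-4d$ is not attainable: passing from the $\ell^1$ lower bound on $\nabla_\omega(\mathcal{T}-\mathcal{T}')$ to a $2\times 2$ minor costs a factor of order $n^{-5/2}$ with $n$ the number of sites in the cube (Lemma 2.5 of \cite{klopp}), which is precisely why the paper first uses localization to reduce from $\ell=[L^\alpha]$ to $\tilde\ell=C\log L$, so that the loss is only polylogarithmic; your choice of a logarithmic microscopic scale would absorb this, but the constant bound you assert does not follow from non-collinearity alone. Third, and most seriously, the map $(\omega_{i_1},\omega_{i_2})\mapsto(\mathcal{T},\mathcal{T}')$ is only a local diffeomorphism, and without a bound on the number of preimages of a point the change-of-variables estimate does not close. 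The paper controls this by exhibiting $\mathcal{T}$ as an eigenvalue of a tensor-product Hamiltonian whose characteristic polynomial has degree at most $m_k^{k_1}$ in each of $\omega_{i_1},\omega_{i_2}$, and then invoking Harnack's Curve Theorem to bound the number of isolated preimages by $m_k^{2m_k}$, uniformly in $L$; an analogous algebraicity argument is indispensable in your proof as well.
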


We refer to \cite{germinet-klopp} for a description of the region of complete localization $\Sigma_{\rm CL}$. For information on L\'evy processes, we refer to the books by Applebaum \cite{applebaum} and by Bertoin \cite{bertoin}.
Theorem \ref{thm:decorrelation-lattice1} follows (see section \ref{sec:proof1}) from the following decorrelation estimate. We assume that $L > 0$ is a positive integer, and that $\ell := [ L^\alpha]$ is the greatest integer less than $L^\alpha$ for an exponent $0 < \alpha < 1$. For polymer type models, we assume that $m_k$ divides $L$ and $\ell$.

\begin{pr}\label{prop:decorrelation-prop1}
We choose any $0 < \alpha < 1$ and $\beta > \frac{5}{2}$,
and length scales $L$ and $\ell := [L^\alpha]$ as described above. For a pair of energies $E, E^\prime \in \Sigma_{\rm CL}$, the region of complete localization, with $\Delta E := |E - E^\prime| > 4d$, and bounded intervals $I,J \subset \R$, we define $I_L(E) := L^{-d}I + E$ and $J_L (E^\prime) := L^{-d} J + E^\prime$ as two scaled energy intervals centered at $E$ and $E^\prime$, respectively. There exists $L_0 = L_0 (\alpha, \beta, d)> 0$ and a constant $C_0 = C_0 (L_0)$ such that for all $L > L_0$ we have
\beq\label{eq:prob1}
\P \{ ({\rm Tr} E_{H_{\omega,\ell}} (I_L (E)) \geq 1 )  \cap  ({\rm Tr} E_{H_{\omega,\ell}} (J_L (E^\prime)) \geq 1 ) \} \leq C_0 \frac{K \| \rho \|_\infty^2 m_k^{2 m_k} }{\Delta E - 4d}  \frac{ (C \log L)^{(1+\beta) d}}{L^{(2-\alpha) d}} . 
\eeq
\end{pr}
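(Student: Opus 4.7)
The plan adapts Klopp's strategy \cite{klopp} for rank-one Anderson models to the finite-rank, possibly non--sign-definite, possibly degenerate setting of this paper, with the transversality input coming from the spectral-width condition $|E-E'|>4d$ rather than from rank-one monotonicity.

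First, I would invoke the extended Minami estimate proved earlier in the paper to reduce to the situation where each of $I_L(E)$ and $J_L(E')$ contains at most one cluster of eigenvalues of multiplicity $\leq m_k$; the combinatorial choice of one eigenvalue from each cluster then produces the factor $m_k^{2 m_k}$ in the constant. Next, since $E, E' \in \Sigma_{\mathrm{CL}}$, Germinet--Klein-type localization yields a good event of probability $\geq 1 - L^{-q}$ (any $q>0$) on which every eigenfunction $\psi$ of $H_{\omega,\ell}$ with eigenvalue in $I_L(E) \cup J_L(E')$ is concentrated, up to error $L^{-q}$, on a sub-cube $\Lambda_k(x_\psi) \subset \Lambda_\ell$ of logarithmic side length; summation over possible centers produces the factor $(C\log L)^{(1+\beta)d}$ in the final bound.

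The core of the proof is a Jacobian transversality estimate. For eigenpairs $(\lambda, \psi_\lambda)$ and $(\mu, \psi_\mu)$ with $\lambda \in I_L(E)$, $\mu \in J_L(E')$, Hellmann--Feynman gives $\partial_{\omega_i}\lambda = \langle \psi_\lambda, P_i \psi_\lambda \rangle \geq 0$ and $\partial_{\omega_i}\mu = \langle \psi_\mu, P_i \psi_\mu \rangle \geq 0$, with $\sum_i \partial_{\omega_i}\lambda = \sum_i \partial_{\omega_i}\mu = 1$ since $\sum_i P_i = I$. Combining $\lambda = \langle\psi_\lambda, \mathcal{L}\psi_\lambda\rangle + \sum_i \omega_i \partial_{\omega_i}\lambda$ (and the analogous identity for $\mu$) with $\|\mathcal{L}\| \leq 2d$ gives
\[
\Big|\sum_i \omega_i\bigl(\partial_{\omega_i}\lambda - \partial_{\omega_i}\mu\bigr)\Big| \;\geq\; |\lambda - \mu| - 4d \;\geq\; \Delta E - 4d \;>\; 0,
\]
which forces $\nabla_\omega\lambda$ and $\nabla_\omega\mu$ to be non-parallel. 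A pigeonhole/selection argument carried out inside the logarithmic-sized effective supports of $\psi_\lambda, \psi_\mu$ then yields a pair of indices $(i_0, j_0)$ for which the $2\times 2$ minor
\[
\det\!\begin{pmatrix} \partial_{i_0}\lambda & \partial_{j_0}\lambda \\ \partial_{i_0}\mu & \partial_{j_0}\mu \end{pmatrix}
\]
is quantitatively bounded below in terms of $(\Delta E - 4d)/K$ and the localization length. Fixing the remaining couplings and performing the change of variables $(\omega_{i_0}, \omega_{j_0}) \mapsto (\lambda, \mu)$, the bounded density hypothesis yields
\[
\P\bigl(\lambda \in I_L(E),\, \mu \in J_L(E')\bigr) \;\leq\; \frac{\|\rho\|_\infty^2}{|\det J|}\, |I_L(E)|\, |J_L(E')|,
\]
and integrating out the other couplings, summing over the possible localization-cube pairs, and accounting for the $\leq m_k^2$ eigenvalue-in-cluster labels, recovers the right-hand side of \eqref{eq:prob1}.

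The main obstacle is the selection step producing the good pair $(i_0, j_0)$. In the rank-one case treated in \cite{klopp,shirley,trinh}, the gradient $\nabla_\omega\lambda$ has a single dominant coordinate $|\psi_\lambda(i)|^2$, so finding a pair with large Jacobian minor is essentially automatic; in the finite-rank case $\partial_{\omega_i}\lambda$ is spread over the effective support of $\psi_\lambda$ and each coordinate is of size only $\sim k^{-d}$ on average. I would handle this by turning the displayed $\ell^1$-lower bound on $\nabla\lambda - \nabla\mu$ into a pigeonhole statement guaranteeing one coordinate of size at least $(\Delta E - 4d)/(K k^d)$, then pairing it with a second coordinate selected using the simplex constraint $\sum_i \partial_{\omega_i}\lambda = 1$. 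The absence of monotonicity in the non--sign-definite case is absorbed because the transversality argument only requires nonnegativity of $\partial_{\omega_i}\lambda$, which follows from $P_i \geq 0$ even when the couplings $\omega_i$ are not.
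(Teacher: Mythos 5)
Your overall architecture (Minami reduction, localization to log-scale cubes, Hellmann--Feynman transversality from $|E-E'|>4d$, selection of a good $2\times 2$ Jacobian minor, change of variables against the bounded density) is the right skeleton and matches the paper's strategy. However, there are two genuine gaps, and both occur exactly at the points where the finite-rank/degenerate setting differs from the rank-one case.

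First, you differentiate an \emph{individual} eigenvalue $\lambda$ chosen from a cluster and write $\partial_{\omega_i}\lambda=\langle\psi_\lambda,P_i\psi_\lambda\rangle$. When the perturbation has rank $m_k>1$ the eigenvalues in $I_L(E)$ may be degenerate or cross as $\omega$ varies, so $\lambda(\omega)$ is not a differentiable function of the multi-parameter $\omega$ and $\psi_\lambda$ is not well defined; the Jacobian of $(\omega_{i_0},\omega_{j_0})\mapsto(\lambda,\mu)$ and the subsequent inverse-function-theorem step then have no meaning on the degenerate set. The paper avoids this precisely by replacing the individual eigenvalue with the mean trace $\mathcal{T}_\ell(E,k_1,\omega)=k_1^{-1}\mathrm{Tr}\,\big(H_{\omega,\ell}E_{H_{\omega,\ell}}(I_L(E))\big)$, which is analytic in $\omega$ as long as the cluster stays isolated (enforced by a security zone via Wegner), has $\partial_{\omega_j}\mathcal{T}=k_1^{-1}\mathrm{Tr}(P_E(\omega)P_j)\ge 0$, and is $\ell^1$-normalized; all of your transversality inequalities then go through verbatim for $\mathcal{T},\mathcal{T}'$. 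Your "choose one eigenvalue per cluster" device does not repair this, and it also misattributes the constant: the factor $m_k^{2m_k}$ is not a combinatorial count of cluster labels.

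Second, your change-of-variables inequality $\P\le\|\rho\|_\infty^2|\det J|^{-1}|I_L||J_L|$ tacitly assumes the map $\varphi:(\omega_{i_0},\omega_{j_0})\mapsto(\mathcal{T},\mathcal{T}')$ is injective (or has boundedly many preimages). In the rank-one monotone case this is essentially free, but here it is not, and without a preimage count the bound fails. The paper controls this by observing that $\mathcal{T}$ is a root of the characteristic polynomial of the tensor-product operator $H^{k_1}_{\omega,\tilde\ell}$, a polynomial of degree at most $m_k^{k_1}$ in each of $\omega_{i_0},\omega_{j_0}$, and then invoking the Harnack Curve Theorem to bound the number of points of $\varphi^{-1}(p)$ by $m_k^{2m_k}$ --- which is where that constant actually comes from. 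You need some substitute for this step; as written, the proposal has no control on the multiplicity of the change of variables. (A smaller point: your closing claim that nonnegativity of $\partial_{\omega_i}\lambda$ survives in the non-sign-definite alloy model is false --- there $\partial_{\omega_n}E_j=\sum_m a_m|u_j(m+n)|^2$ can be negative and the gradient is not normalized; the paper needs a separate lower bound, Lemma \ref{lemma:alloy-gradient1}, for that case.)
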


The extended Minami estimate \cite{hislop-krishna1} (see section \ref{subsec:reduction1}) implies that we only need to estimate the probability that there is a small number of eigenvalues in each interval:
\beq\label{eq:prob2}
\P \{ ({\rm Tr} E_{H_{\omega,\ell}} (I_L (E)) \leq m_k )  \cap  ({\rm Tr} E_{H_{\omega,\ell}} (J_L(E^\prime)) \leq m_k ) \}
\eeq
In fact, we consider the more general estimate:
\beq\label{eq:prob3}
\P \{ ({\rm Tr} E_{H_{\omega,L}} (I_L(E)) = k_1 )  \cap  ({\rm Tr} E_{H_{\omega,L}} (J_L(E^\prime)) = k_2 ) \},
\eeq
where $k_1, k_2 \leq m_k$ are positive integers independent of $L$.

We allow that there may be several eigenvalues in $I_L(E)$ and $J_L (E^\prime)$ with nontrivial multiplicities.
To deal with this, we introduce the mean trace of the eigenvalues $E_j(\omega)$ of $H_{\omega, \ell}$ in the interval $I_L(E)$:
\beq\label{eq:defn-weight-ave1}
\mathcal{T}^\ell (E, k_1, \omega) :=
\frac{Tr\big(H_{\omega,\ell} E_{H_{\omega,\ell}}(I_L(E))\big)}{Tr\big(E_{H_{\omega,\ell}}(I_L(E))\big)} = \frac{1}{k_1} \sum_{j=1}^{k_1} E_j^\ell(\omega),
\eeq
where $k_1 := Tr\big(E_{H_{\omega,\ell}}(I_L(E))\big)$ is the number of eigenvalues, including multiplicity,
of $H_{\omega, \ell}$ in $I_L(E)$.
Similarly, we define
\beq\label{eq:defn-weight-ave2}
\mathcal{T}_\ell(E^\prime, k_2, \omega) := \frac{Tr\big(H_{\omega,\ell} E_{H_{\omega,\ell}}(J_L(E^\prime))\big)}{Tr\big(E_{H_{\omega,\ell}}(J_L(E^\prime))\big)}
= \frac{1}{k_2} \sum_{j=1}^{k_2} E_j^\ell(\omega),
\eeq
where $k_2 := Tr\big(E_{H_{\omega,\ell}}(J_L(E^\prime))\big)$. We will show in section \ref{sec:sum-estimates1} that these weighted sums behave like effective
eigenvalues in each scaled interval $I_L(E)$ and $J_L(E^\prime)$, respectively.

As another application of the extended Minami estimate, we prove that the multiplicity of eigenvalues in $\Sigma_{\rm CL}$ is at most the multiplicity of the perturbations $m_k$ in dimensions $d \geq 1$. The proof of this fact follows the argument of Klein and Molchanov \cite{klein-molchanov}. For $d=1$, Shirley \cite{shirley} proved that the usual Minami estimate holds for the dimer model ($m_k = 2$) so the eigenvalues are almost surely simple.


\subsection{Contents}\label{subsec:contents}

We present properties of the average of eigenvalues in section \ref{sec:sum-estimates1}, including gradients estimates.The proof of the main technical result, Proposition \ref{prop:decorrelation-prop1}, is presented in section \ref{sec:proof-prop1}. The proof of asymptotic independence is given in section \ref{sec:proof1}. We show in section \ref{sec:multiplicity1} that the argument of Klein-Molchanov \cite{klein-molchanov} applies to higher rank perturbations and implies that the multiplicity of eigenvalues in $\Sigma_{\rm CL}$ is at most $m_k$, the uniform rank of the perturbations. In section \ref{sec:alloy-type1}, we prove that the decorrelation estimates, and therefore
asymptotic independence of local eigenvalue processes, hold for non sign definite models studied by Tautenhahn and Veseli\`c \cite{tautenhahn-veselic1}. This paper replaces the manuscript \cite{hislop-krishna2} by the first two authors, completing and improving the arguments, and extending the results. 


\section{Estimates on weighted sums of eigenvalues}\label{sec:sum-estimates1}
\setcounter{equation}{0}

In this section, we present some technical results on weighted sums of eigenvalues of $H_{\omega, \ell}$
defined in \eqref{eq:defn-weight-ave1}-\eqref{eq:defn-weight-ave2}.
These are used in section \ref{sec:proof1} to prove the main technical result \eqref{eq:prob1}.
We recall that $\ell = [L^\alpha]$, for $0 < \alpha < 1$.


\subsection{Properties of the weighted trace}\label{subsec:weight-trace1}

When the total number of eigenvalues of $H_{\omega,\ell}$ in $I_L(E) := L^{-d}I + E$ is $k_1$, we have
\beq\label{defn:trace3}
\mathcal{T}(\omega) := \mathcal{T}_\ell(E,k_1) := \mathcal{T}_\ell (E, k_1, \omega)
= \frac{1}{k_1} \sum_{j=1}^{k_1} E_j(\omega),
\eeq
for eigenvalues $E_j^\ell(\omega) \in {I}_L (E)$.
Properties (1)-(3) below are valid
for the similar expression obtained by replacing $k_1$ with $k_2$, the interval $I$ with $J$, and the energy $E$ with $E^\prime$. We will write
$$
\mathcal{T}^\prime(\omega) := \mathcal{T}_\ell(E^\prime, k_2) := \mathcal{T}(E^\prime, k_2, \omega).
$$

We write $P_E(\omega)$ for the spectral projection $E_{H_{\omega, \ell}}(I_L(E))$ onto the eigenspace of $H_{\omega, \ell}$ corresponding to the eigenvalues $E_m^\ell(\omega)$ in $I_L(E)$. Let $\gamma_E$ be a simple closed contour containing only these eigenvalues of $H_{\omega, \ell}$ with a counter-clockwise orientation. Since the mean of the eigenvalues may be expressed as
$$
\mathcal{T}(\omega) = \frac{1}{k_1} {\rm Tr} H_{\omega,\ell} P_E (\omega),
$$
and the projection has the representation
$$
P_E (\omega) = \frac{-1}{2 \pi i} \int_{\gamma_E} R(z) ~dz, ~~~ R(z) := (H_{\omega, \ell} - z)^{-1}.
$$

The Hamiltonian $H_{\omega, \ell}$ is analytic in the variables $\omega_j$. The projection $P_E(\omega)$ is also analytic in $\omega$ provided the contour $\gamma_E$ remains isolated from the other eigenvalues of $H_{\omega, \ell}$. We work on that part of the probability space for which the total multiplicity of the eigenspace ${\rm Ran}~ P_E(\omega) = k_1$, that is, on the set $\Omega(k_1) := \{ \omega ~|~ {\rm Tr} P_E(\omega) = k_1 \}$.
If we place a security zone around $I_L(E)$ of width $L^{-d}$ then the probability that $H_{\omega, \ell}$ has no eigenvalues in this zone is larger than $1 - (\ell / L)^d$ by the Wegner estimate.
On this set, it follows that
\beq\label{eq:tr-deriv1}
\frac{\partial \mathcal{T}(\omega) }{\partial \omega_j} = \frac{1}{2 \pi i k_1} \int_{\gamma_E} {\rm Tr} \{ R(z) P_j R(z) \} ~z dz ,
\eeq
where $P_j$ is the finite-rank projector associated with site $j$ or block $j$, depending on the model.
Evaluating the contour integral, we find that
\beq\label{eq:tr-deriv2}
\frac{\partial \mathcal{T}(\omega) }{\partial \omega_j} = \frac{1}{k_1} {\rm Tr} \{ P_E (\omega) P_j  \} .
\eeq

Formula \eqref{eq:tr-deriv2} shows that the eigenvalue average behaves like an effective eigenvalue in the following sense:
\begin{enumerate}
\item $\mathcal{T}_\ell (E, k_1, \omega) \in {I}_L (E)$,
 so the average of the eigenvalue cluster in ${I}_L(E)$ belongs to ${I}_L (E)$.


\item The $\omega_j$-derivative of $\mathcal{T}(\omega)$ is nonnegative as follows directly from \eqref{eq:tr-deriv2}.

\item Since $\sum_{j \in \mathcal{J} \cap \Lambda_\ell} P_j = I_{\Lambda_\ell}$, it follows from this and \eqref{eq:tr-deriv2}
that the $\omega$-gradient of the weighted trace is normalized:
$\| \nabla_\omega \mathcal{T}(\omega) \|_{\ell^1} = 1$.




\end{enumerate}

\begin{remark}\label{remark:t-separation}
It follows from property (1) above and the fact that the intervals $I_L(E)$ and $J_L(E)$ are $\mathcal{O}(L^{-d})$, that if $|E - E^\prime| > 4d$,
then $|\mathcal{T}(\omega) - \mathcal{T}(\omega^\prime)| > 4d - cL^{-d}$, for some $c > 0$.
We will use this result below.
\end{remark}



\subsection{Variational formulae}\label{subsec:variation1}

We can estimate the variation of the mean trace with respect to the random variables as follows. The $\omega$-directional derivative is
 \bea\label{eq:variation2}
 \omega \cdot \nabla_\omega (\mathcal{T}(\omega) - \mathcal{T}^\prime (\omega) ) &=& \sum_{j \in \mathcal{J} \cap \Lambda_\ell} \omega_j \left\{ \frac{1}{k_1} {\rm Tr} P_E(\omega) P_j - \frac{1}{k_2} {\rm Tr} P_{E^\prime} (\omega) P_j \right\} \nonumber \\
 &=& \left\{ \frac{1}{k_1} {\rm Tr} P_E(\omega) V^\omega_{\Lambda_\ell} - \frac{1}{k_2} {\rm Tr} P_{E^\prime} (\omega) V^\omega_{\Lambda_\ell} \right\} \nonumber \\
 &=&  \left\{ \frac{1}{k_1} {\rm Tr} P_E(\omega) H_{\omega, \ell} - \frac{1}{k_2} {\rm Tr} P_{E^\prime} (\omega) H_{\omega, \ell} \right. \nonumber \\
  & &  \left. - \frac{1}{k_1} {\rm Tr} P_E(\omega) \mathcal{L} + \frac{1}{k_2} {\rm Tr} P_{E^\prime} (\omega) \mathcal{L}  \right\}.
  \eea


 The absolute value of each trace involving the Laplacian $\mathcal{L}$ in \eqref{eq:variation2} may be bounded above by $2 d$. If we assume that
$$
| \mathcal{T}(\omega) - \mathcal{T}^\prime(\omega) | \geq \Delta E
$$
then we obtain from \eqref{eq:variation2},
\bea\label{eq:variation3}
\Delta E - 4d & \leq & | \mathcal{T}(\omega) - \mathcal{T}^\prime(\omega) | - 4d \nonumber \\
 & \leq & |\omega \cdot \nabla_\omega (\mathcal{T}(\omega) - \mathcal{T}^\prime (\omega) )|.
\eea
As the number of components of $\omega$ is bounded by $\ell^d$ and $|\omega_j | \leq K$, it follows by Cauchy-Schwartz inequality that
\beq\label{eq:variation4}
\| \nabla_\omega (\mathcal{T}(\omega) - \mathcal{T}^\prime (\omega) )\|_2 \geq \frac{\Delta E - 4d}{K} \frac{1}{(2\ell + 1)^{d/2}}.
\eeq
We also obtain an $\ell^1$ lower bound:
\beq\label{eq:variation5}
\| \nabla_\omega (\mathcal{T}(\omega) - \mathcal{T}^\prime (\omega) )\|_1 \geq \frac{\Delta E - 4d}{K} .
\eeq


\subsection{Dependence of the weighted eigenvalue averages on the random variables}\label{subsec:dependence-rv1}

Suppose that $H_{\omega, \ell}$ has $k_1$ eigenvalues (including multiplicities) $\{ E_m^\ell (\omega); m=1, \ldots, k_1 \}$ in an interval $I \subset \R$.  The corresponding weighted average is
\beq\label{eq:weighted2}
\mathcal{T}_\omega^\ell (k_1) := \frac{1}{k_1} \sum_{j=1}^{k_1} E_{m_j}^{\ell} (\omega) \in I.
\eeq
We consider the tensor product operator $H_{\omega, \ell}^{k_1}$
on the Hilbert space $\mathcal{H}_\ell^{k_1} := \otimes_{k_1} \ell^2 (\Lambda_\ell)$ defined as
\beq\label{eq:tensor-hamiltonian1}
H_{\omega, \ell}^{k_1} := \frac{1}{k_1} \sum_{j=1}^{k_1} I \otimes \cdots \otimes H_{\omega, \ell} \otimes \cdots \otimes I,
\eeq
where the Hamiltonian $H_{\omega, \ell}$ appears in the $j^{th}$-position.
From the normalized eigenfunctions $\psi_m^{\ell}$ of $H_{\omega,\ell}$, we form the eigenfunctions
\beq\label{eq:tensore-ef1}
\psi_{\sigma(k_1)} := \psi_{m_1}^\ell \otimes \psi_{m_2}^\ell \otimes \cdots \otimes \psi_{m_{k_1}}^\ell ,
\eeq
where $\sigma(k_1) = (m_1, \ldots, m_{k_1})$.
These functions are eigenfunctions of $H_{\omega, \ell}^{k_1}$ with eigenvalue
$\mathcal{T}_\omega^\ell (k_1)$ defined in \eqref{eq:weighted2}:
\beq\label{eq:tensor-ev-eqn1}
 H_{\omega, \ell}^{k_1}  \psi_{\sigma(k_1)}^\ell  = \mathcal{T}_\omega^\ell (k_1)   \psi_{\sigma(k_1)}^\ell.
 \eeq

It is important to know how the eigenvalue average $\mathcal{T}_\omega^\ell (k_1)$ depends on the random variables $\omega_i, \omega_j$ with $i,j \in \Lambda_\ell$. As the matrix $H_{\omega, \ell}^{k_1}$ is of size $(m_k | \Lambda_\ell|)^{k_1} \times (m_k | \Lambda_\ell|)^{k_1}$, the expression ${\rm det} ( (m_k | \Lambda_\ell|)^{k_1} - E I)$ is a real polynomial of degree $m_k^{k_1}$ in the pair of random variables $\omega_i, \omega_j$ for $i,j \in \Lambda_\ell$.
We will use this in section \ref{sec:proof-prop1} when we apply the Harnack Curve Theorem.

\section{Proof of Proposition \ref{prop:decorrelation-prop1}}\label{sec:proof-prop1}
\setcounter{equation}{0}

In this section, we prove the technical result, Proposition \ref{prop:decorrelation-prop1}.
We let
$X_\ell(I_L(E)) := {\rm Tr} E_{H_{\omega,\ell}} (I_L(E))$,
$X_\ell(J_L(E^\prime)) := {\rm Tr} E_{H_{\omega,\ell}} (J_L(E^\prime))$, and consider the scale $\ell = [L^\alpha]$, for $0 < \alpha < 1$.
Then, we show
\beq\label{eq:same-number1}
\P \{ ( X_\ell(I_L(E)) \geq 1 ) \cap ( X_\ell(J_L(E')) \geq 1) \} \leq  \leq C_0 \frac{K \| \rho \|_\infty^2 m_k^{2 m_k} }{\Delta E - 4d}  \frac{ (C \log L)^{(1+\beta) d}}{L^{(2-\alpha) d}} .
\eeq
for positive numbers $0 < \alpha < 1$ and any $\beta > \frac{5}{2}$. 

%


\subsection{Reduction via the extended Minami estimate}\label{subsec:reduction1}

Let $\chi_A(\omega)$ be the characteristic function on the subset $A \subset \Omega$. In this section, we write $J_L(E) := L^{-d} J + E$ since we are dealing with one interval.
We use an extended Minami estimate of the form
\beq\label{eq:ext-minami1}
\E \{ \chi_{ \{ \omega ~|~  X_\ell(J_L(E)) \geq m_k +1 \} } X_\ell(J_L(E))( X_\ell(J_L(E)) - m_k ) \geq 1 \}  \nonumber \\
  \leq  C_M \left( \frac{\ell}{L} \right)^{2d} ,
\eeq
as follows from \cite{hislop-krishna1}.

\begin{lemma}\label{lemma:minami-consequence1}
Under the condition that the projectors have uniform dimension $m_k \geq 1$, we have
\beq\label{eq:ext-minami2}
\P \{ X_\ell(J_L(E)) > m_k \} \leq C_M \left( \frac{\ell}{L}\right)^{2d} .
\eeq
\end{lemma}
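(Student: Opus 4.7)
The plan is to deduce this tail bound directly from the extended Minami estimate \eqref{eq:ext-minami1} by a one-line Markov/Chebyshev-type argument. The whole content of the lemma is that the integrand appearing in \eqref{eq:ext-minami1} dominates the indicator function of the event $\{X_\ell(J_L(E)) > m_k\}$, so integrating the inequality turns the moment-type bound into a probability bound.

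First I would observe that the event $\{X_\ell(J_L(E)) > m_k\}$ is identical to $\{X_\ell(J_L(E)) \geq m_k + 1\}$, since $X_\ell$ is integer-valued. On this event, the quantity $X_\ell(J_L(E))\bigl(X_\ell(J_L(E)) - m_k\bigr)$ is at least $(m_k+1)\cdot 1 = m_k + 1 \geq 1$. Consequently, pointwise on $\Omega$,
\begin{equation*}
\chi_{\{X_\ell(J_L(E)) > m_k\}}(\omega) \;\leq\; \chi_{\{X_\ell(J_L(E)) \geq m_k+1\}}(\omega) \, X_\ell(J_L(E))\bigl(X_\ell(J_L(E)) - m_k\bigr).
\end{equation*}
Taking expectations on both sides and invoking \eqref{eq:ext-minami1} would then yield
\begin{equation*}
\P\{X_\ell(J_L(E)) > m_k\} \;\leq\; \E\bigl\{\chi_{\{X_\ell \geq m_k+1\}} X_\ell (X_\ell - m_k)\bigr\} \;\leq\; C_M \left(\frac{\ell}{L}\right)^{2d},
\end{equation*}
which is the desired conclusion.

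The main (and essentially only) obstacle is verifying that the extended Minami estimate cited from \cite{hislop-krishna1} is stated in exactly the form \eqref{eq:ext-minami1}; once that is granted, the passage to the probability estimate is purely algebraic and uses only the integer-valuedness of the eigenvalue counting function together with the elementary inequality $n(n-m_k)\geq 1$ for $n \geq m_k+1$. No further probabilistic or spectral input is required, which is why the argument can be presented as a short corollary rather than a standalone lemma requiring new ingredients.
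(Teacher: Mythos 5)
Your proposal is correct and is essentially the same argument as the paper's: both use the integer-valuedness of $X_\ell$ to identify $\{X_\ell > m_k\}$ with $\{X_\ell \geq m_k+1\}$, dominate the indicator of that event by $\chi_{\{X_\ell \geq m_k+1\}} X_\ell (X_\ell - m_k)$, and conclude via the extended Minami estimate \eqref{eq:ext-minami1}. The only cosmetic difference is that you phrase it as a pointwise domination followed by taking expectations, while the paper writes the same steps as a chain of probability inequalities.
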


\begin{proof}
Recalling that $X_\ell(J_L(E)) \in \{ 0 \} \cup \N$, we have
\bea\label{eq:minami-conseq1}
\lefteqn{ \P \{ X_\ell(J_L(E)) > m_k \} } \nonumber \\
 & \leq & \P \{ X_\ell(J_L(E)) - m_k \geq 1 \} \nonumber \\
 & = &  \P \{ X_\ell(J_L(E)) ( X_\ell(J_L(E)) - m_k ) \geq 1 \}  \nonumber \\
 & = &  \P \{ \chi_{ \{ \omega ~|~  X_\ell(J_L(E)) \geq m_k +1 \} } X_\ell(J_L(E))( X_\ell(J_L(E)) - m_k ) \geq 1 \}  \nonumber \\
 & \leq & \E \{ \chi_{ \{ \omega ~|~  X_\ell(J_L(E)) \geq m_k +1 \} } X_\ell(J_L(E))( X_\ell(J_L(E)) - m_k ) \geq 1 \}  \nonumber \\
 & \leq & C_M \left( \frac{\ell}{L} \right)^{2d} ,
 \eea
by the extended Minami estimate.
\end{proof}

%

\subsection{Estimates on the joint probability}\label{subsec:joint-prob1}

We return to considering two scaled intervals $I_L(E)$ and $J_L(E^\prime)$, with $E \neq E^\prime$.
Because of \eqref{eq:ext-minami1}, we have
\bea\label{eq:same-number2}
\lefteqn{ \P \{ ( X_\ell(I_L(E)) \geq 1 ) \cap ( X_\ell(J_L(E')) \geq 1) \} } \nonumber \\
 & \leq & \P \{ ( X_\ell(I_L(E)) \geq m_k +1 ) \cap ( X_\ell(J_L(E^\prime)) \geq m_k +1 ) \} \nonumber \\
 & & +  \P \{ ( X_\ell(I_L(E)) \leq m_k ) \cap ( X_\ell(J_L(E^\prime)) \geq m_k +1 ) \}
 \nonumber \\
 & & +  \P \{ ( X_\ell(I_L(E)) \leq m_k + 1 ) \cap ( X_\ell(J_L(E^\prime)) \geq m_k ) \}
 \nonumber \\
 & & + \P \{ ( X_\ell(I_L(E)) \leq m_k  ) \cap ( X_\ell(J_L(E^\prime)) \leq m_k ) \}
 \nonumber \\
 & \leq & \P \{ ( X_\ell(I_L(E)) \leq m_k ) \cap ( X_\ell(J_L(E^\prime)) \leq m_k  ) \} \nonumber \\
 & &
 + C_0 \left( \frac{\ell}{L} \right)^{2d} .
\eea
The probability on the last line of \eqref{eq:same-number2} may be bounded above by
\bea\label{eq:reduction1}
\lefteqn{ \P \{ ( X_\ell(I_L(E)) \leq m_k ) \cap ( X_\ell(J_L(E^\prime)) \leq m_k  ) \} } & & \nonumber \\
 & \leq &
\sum_{k_1,k_2 = 1}^{m_k} \P \{ ( X_\ell(I_L(E))  = k_1 ) \cap ( X_\ell(J_L(E^\prime)) = k_2  ) \}.
\eea
Since $m_k$ is independent of $L$, it suffices to estimate
\beq\label{eq:same-number3}
\P \{ ( X_\ell(I_L(E))  = k_1 ) \cap ( X_\ell(J_L(E^\prime)) = k_2  ) \} ,
\eeq
for $1 \leq k_1, k_2 \leq m_k$.

\subsection{Reduction of length scale using localization}\label{subsec:localization1}

If we continue to work at scale $\ell = L^\alpha$, we can prove Proposition \eqref{prop:decorrelation-prop1}
but only for $\alpha > 0$ sufficiently small. In order to prove Proposition \ref{prop:decorrelation-prop1} for all $0 < \alpha < 1$, we must follow Klopp \cite[section 2.2]{klopp} and use localization in order to reduce the length scale to $\tilde{\ell} := C \log L$, for $C > 0$. Once we work with the length scale $\tilde{\ell}$, we will be able to prove Proposition \ref{prop:decorrelation-prop1} for all $0 < \alpha < 1$.
The goal of this section is to bound \eqref{eq:same-number3} by a similar estimate involving the length scale $\tilde{\ell}$ up to errors vanishing as $L \rightarrow \infty$. The key to this reduction is the localization properties of the Hamiltonians given in $(Loc)$.

\begin{defn}\label{defn:loc1}
We say that the local Hamiltonian $H_{\omega,\ell}$ satisfies $(Loc)$ in an interval $I \subset \Sigma$
if
\begin{enumerate}
\item The finite-volume fractional
moment criteria of \cite{aizenman1} holds on the interval $I$ for some constant $C>0$ sufficiently large;
\item There exists $\nu > 0$ such that, for any $p > 0$, there exists $q > 0$ and a length scale $\ell_0 > 0$ such
that, for all $\ell > \ell_0$, the following hold with probability greater than $1-L^{−p}$:
\begin{enumerate}
\item If $\varphi_{j}^\ell(\omega)$ is a normalized eigenvector of $H_{\omega, \ell}$ with eigenvalue $E_j^\ell (\omega) \in I$, and
\item $x_j^\ell(\omega)$ is a maximum of $x \rightarrow | \varphi_j^\ell (\omega)|$ in $\Lambda_\ell$,
\end{enumerate}
then, for $n \in \Lambda_\ell$, one has
\beq\label{eq:loc1}
|\varphi_j^\ell (\omega)(x)| \leq \ell^q e^{- \nu \| x - x_j^\ell(\omega)\|}.
\eeq
\end{enumerate}
The point $x_j^\ell(\omega)$ is called a localization center for $\varphi_j^\ell (\omega)$ or for $E_j^\ell (\omega)$.
\end{defn}

The main consequence in the present context of $(Loc)$ is the following result on the localization of eigenvectors.

\begin{lemma}\cite[Lemma 2.2]{klopp}\label{lemma:loc1}
Let $\Omega_0$ be the set of configuration $\omega$ for which $(Loc)$ holds with probability greater than $1 - L^{-2d}$.
There exists a covering of $\Lambda_\ell = \cup_{\gamma \in \Gamma} \Lambda_{\tilde{\ell}(\gamma)}$, where
$\Lambda_{\tilde{\ell}(\gamma)} = \Lambda_{\tilde{\ell}} + \gamma$, such that for all $\omega \in \Omega_0 \cap \{ \omega ~|~
(X_\ell (I_L(E)) = k_1 ) \cap (X_\ell (J_L(E^\prime)) = k_2 )\}$, we have that \\
\noindent
\textbf{either} \\
\noindent
(1) There exist $\gamma, \gamma^\prime \in \Gamma$ so that $\Lambda_{\tilde{\ell}(\gamma)} \cap \Lambda_{\tilde{\ell}(\gamma^\prime)} = \emptyset$ and
$$
X_{\tilde{\ell}(\gamma)} (\tilde{I}_L(E)) \geq 1 ~\mbox{and} ~
X_{\tilde{\ell}(\gamma^\prime)} (\tilde{J}_L(E^\prime)) \geq 1 ,
$$
\textbf{or}\\
\noindent
(2) For some $\gamma \in \Gamma$, we have $X_{5 \tilde{\ell}(\gamma)} (\tilde{I}_L(E)) = k_1$ and $X_{5 \tilde{\ell}(\gamma)} (\tilde{J}_L(E^\prime)) = k_2$.
\end{lemma}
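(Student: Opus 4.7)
\medskip
\noindent\textbf{Proof proposal.} The plan is to combine the exponential localization in $(Loc)$ with a standard covering argument on $\Lambda_\ell$, so that each eigenvalue in $I_L(E)$ or $J_L(E^\prime)$ is effectively carried by a small sub-box of side $\tilde{\ell}=C\log L$, and then split into the two cases of the lemma according to whether the associated sub-boxes are well separated or all clustered. Throughout I will work on the event $\Omega_0$ on which the eigenfunction bounds \eqref{eq:loc1} hold simultaneously for every eigenvalue of $H_{\omega,\ell}$ falling in a slight enlargement of $I_L(E)\cup J_L(E^\prime)$; by $(Loc)$ the complement has probability at most $L^{-2d}$, at the price of polynomial powers of $\ell$ in the decay rate.

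\medskip
\noindent\emph{Step 1 (localization centers and covering).} For each eigenvalue $E_j^\ell(\omega)\in I_L(E)$, pick a localization center $x_j^\ell(\omega)\in\Lambda_\ell$ as in Definition~\ref{defn:loc1}, and similarly for eigenvalues in $J_L(E^\prime)$. Cover $\Lambda_\ell$ by a minimal family $\{\Lambda_{\tilde{\ell}(\gamma)}\}_{\gamma\in\Gamma}$ of translates of $\Lambda_{\tilde{\ell}}$, chosen so that the enlarged boxes $\Lambda_{3\tilde{\ell}(\gamma)}$ have bounded overlap. Every localization center lies in some $\Lambda_{\tilde{\ell}(\gamma)}$, which associates to each $E_j^\ell(\omega)$ a cell index $\gamma(j)$. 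Call $\Gamma_E$, $\Gamma_{E^\prime}$ the subsets of $\Gamma$ that carry at least one such center for $I_L(E)$, respectively $J_L(E^\prime)$.

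\medskip
\noindent\emph{Step 2 (from global to local eigenvalues).} Using the exponential decay in \eqref{eq:loc1}, I transplant each global eigenfunction to the box $\Lambda_{5\tilde{\ell}(\gamma(j))}$ by multiplying with a smooth cutoff supported in $\Lambda_{3\tilde{\ell}(\gamma(j))}$. The commutator $[\mathcal{L},\chi]$ is supported in a shell of thickness $1$ at distance $\geq \tilde{\ell}$ from $x_j^\ell(\omega)$, so Combes--Thomas/standard geometric resolvent estimates and \eqref{eq:loc1} give
\[
\bigl\|(H_{\omega,5\tilde{\ell}(\gamma(j))}-E_j^\ell(\omega))\chi\varphi_j^\ell\bigr\| \leq \ell^{q^\prime}e^{-\nu\tilde{\ell}/2},
\]
which, for $C$ chosen large enough in $\tilde{\ell}=C\log L$, is smaller than $L^{-N}$ for any preassigned $N$. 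By the min--max principle this forces $H_{\omega,5\tilde{\ell}(\gamma(j))}$ to have an eigenvalue in the slightly enlarged interval $\tilde{I}_L(E)$ (respectively $\tilde{J}_L(E^\prime)$), and a Wegner-type argument on a security band of width $L^{-d}$ placed around the endpoints of $I_L(E)$ and $J_L(E^\prime)$ (removing a further set of probability $O((\ell/L)^d)$) ensures that the counts are not corrupted by eigenvalues sliding across the boundary. The analogous statement with the enlarged box replaced by $\Lambda_{\tilde{\ell}(\gamma)}$ holds whenever the center $x_j^\ell(\omega)$ sits at distance $\geq \tilde{\ell}/2$ from $\partial\Lambda_{\tilde{\ell}(\gamma)}$, which one can always arrange for at least one $\gamma$ per localization center by choosing the covering with enough overlap.

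\medskip
\noindent\emph{Step 3 (dichotomy).} Now consider the geometric positions of $\Gamma_E$ and $\Gamma_{E^\prime}$. If there exist $\gamma\in\Gamma_E$ and $\gamma^\prime\in\Gamma_{E^\prime}$ with $\Lambda_{\tilde{\ell}(\gamma)}\cap\Lambda_{\tilde{\ell}(\gamma^\prime)}=\emptyset$, Step~2 gives $X_{\tilde{\ell}(\gamma)}(\tilde{I}_L(E))\geq 1$ and $X_{\tilde{\ell}(\gamma^\prime)}(\tilde{J}_L(E^\prime))\geq 1$, which is case~(1). Otherwise every cell of $\Gamma_E$ meets every cell of $\Gamma_{E^\prime}$, so the union $\bigcup_{\gamma\in\Gamma_E\cup\Gamma_{E^\prime}}\Lambda_{\tilde{\ell}(\gamma)}$ has diameter $\leq 2\tilde{\ell}$ and is therefore contained in some $\Lambda_{5\tilde{\ell}(\gamma_0)}$. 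The transplantation in Step~2, now performed inside the single enlarged box $\Lambda_{5\tilde{\ell}(\gamma_0)}$, produces at least $k_1$ eigenvalues in $\tilde{I}_L(E)$ and $k_2$ in $\tilde{J}_L(E^\prime)$; the reverse inequality follows from the same Combes--Thomas comparison applied in the opposite direction, so we obtain exactly $k_1$ and $k_2$, giving case~(2).

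\medskip
\noindent\emph{Main obstacle.} The delicate point is ensuring that the multiplicities are preserved \emph{exactly} in case~(2): one must rule out the possibility that two distinct global eigenvalues merge into a single local one, or that a spurious local eigenvalue appears from boundary effects. Both are handled by the security zone of width $L^{-d}$ around the interval endpoints (Wegner) combined with a quantitative min--max count using the mutual orthogonality of the transplanted $\chi\varphi_j^\ell$, which holds up to errors $O(e^{-\nu\tilde{\ell}/2})$; the resulting off-diagonal Gram-matrix entries are under control precisely because $\tilde{\ell}=C\log L$ with $C$ large. This is the step where the choice of the constant $C$, and hence $\beta>5/2$ downstream, is fixed.
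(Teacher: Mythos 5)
The paper does not prove this lemma at all: it is quoted verbatim from Klopp \cite[Lemma 2.2]{klopp}, so there is no in-house proof to compare against. Your reconstruction follows exactly the strategy behind Klopp's proof — localization centers, a covering at scale $\tilde{\ell}=C\log L$, transplantation of eigenfunctions by cutoffs with exponentially small commutator errors absorbed into the enlarged intervals $\tilde{I}_L(E)$, $\tilde{J}_L(E^\prime)$, and the geometric dichotomy in which the failure of case (1) forces all relevant cells to pairwise intersect and hence to sit inside a single $\Lambda_{5\tilde{\ell}(\gamma_0)}$ (your ``diameter $\leq 2\tilde{\ell}$'' is not quite the right count, but the containment in a $5\tilde{\ell}$-box is, and that is where the factor $5$ comes from). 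So the architecture is correct.

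Two points are genuine gaps as written. First, the ``reverse inequality'' in Step 3 does not follow ``from the same Combes--Thomas comparison applied in the opposite direction'': the comparison is not symmetric. An eigenfunction of $H_{\omega,5\tilde{\ell}(\gamma_0)}$ with eigenvalue in $\tilde{I}_L(E)$ carries no a priori decay away from $\partial\Lambda_{5\tilde{\ell}(\gamma_0)}$, so extending it by zero to $\Lambda_\ell$ produces a boundary term $\|[\mathcal{L},\chi]\psi\|$ that you cannot bound by $e^{-\nu\tilde{\ell}/2}$. To get the exact count $=k_1$ you must either include localization at scale $5\tilde{\ell}$ for the sub-box Hamiltonians in the definition of $\Omega_0$ (with the attendant union bound over the $(\ell/\tilde{\ell})^d$ cells), or run a two-sided projection comparison $\|P-\tilde P\|<1$, which again needs control of $(1-P)\tilde P$ and hence the same input. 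Second, your security zone of width $L^{-d}$ removes an additional event of probability $O((\ell/L)^d)$; this is not compatible with the lemma as stated (which is a deterministic dichotomy on $\Omega_0\cap\{X_\ell(I_L(E))=k_1\}\cap\{X_\ell(J_L(E^\prime))=k_2\}$), and if propagated into Lemma \ref{lemma:scale-reduction1} it would contribute an error $L^{-(1-\alpha)d}$ that dominates both the $(\ell/L)^{2d}$ term there and the final bound \eqref{eq:prob1}. The fix is that the only eigenvalues that can corrupt the count are global eigenvalues in $\tilde{I}_L(E)\setminus I_L(E)$, a set of measure comparable to the transplantation error $e^{-\nu\tilde{\ell}/2}=O(L^{-N})$; a Wegner bound on a band of that width costs only $O(\ell^d L^{-N})$, which is negligible and can be folded into $\Omega_0$. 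With those two repairs your argument is the standard one.
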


Because of the localization properties of the eigenfunctions given in Lemma \ref{lemma:loc1}, we can reduce the estimate on scale $\ell$ to one on scale $\tilde{\ell}$ as presented in the following lemma.

\begin{lemma}\label{lemma:scale-reduction1}
For any $k_1, k_2 \in \{ 1, \ldots, m_k\}$, we have
\bea\label{eq:vanishing-prob1}
\lefteqn{\P \{ ( X_\ell(I_L(E))  = k_1 ) \cap ( X_\ell(J_L(E^\prime)) = k_2  ) \}} \nonumber \\
 &  \leq  &
\left( \frac{\ell}{L}  \right)^{2d} + \left(\frac{\ell}{\tilde{\ell}} \right)^{d} \P \{  (X_{5 \tilde{\ell}(\gamma)} (\tilde{I}_L(E)) = k_1 ) \cap (X_{5 \tilde{\ell}(\gamma)} (\tilde{J}_L(E^\prime)) = k_2) \}. \nonumber \\
 & &
\eea
\end{lemma}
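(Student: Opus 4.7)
The plan is to apply the dichotomy in Lemma \ref{lemma:loc1} after removing the bad localization set. Write
\[
\P\{(X_\ell(I_L(E)) = k_1)\cap(X_\ell(J_L(E')) = k_2)\} = \P\{\,\cdot\, \cap \Omega_0\} + \P\{\,\cdot\, \cap \Omega_0^c\}.
\]
The second term is bounded by $\P(\Omega_0^c)\le L^{-2d}\le(\ell/L)^{2d}$, which is absorbed into the first term of the claimed inequality. On $\Omega_0\cap\{(X_\ell(I_L(E))=k_1)\cap(X_\ell(J_L(E'))=k_2)\}$, Lemma \ref{lemma:loc1} forces one of the two alternatives (1) or (2) to hold, so the first probability is at most the sum of the probabilities of these two alternatives.

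For alternative (1), I would union bound over ordered pairs $(\gamma,\gamma')\in\Gamma^2$ with $\Lambda_{\tilde\ell(\gamma)}\cap\Lambda_{\tilde\ell(\gamma')}=\emptyset$. Because the restricted Hamiltonians $H_{\omega,\tilde\ell(\gamma)}$ and $H_{\omega,\tilde\ell(\gamma')}$ depend on disjoint families of random variables $\{\omega_i\}$, the two events $\{X_{\tilde\ell(\gamma)}(\tilde I_L(E))\ge1\}$ and $\{X_{\tilde\ell(\gamma')}(\tilde J_L(E'))\ge1\}$ are independent. The Wegner estimate (a consequence of the finite-volume fractional moment hypothesis in Definition \ref{defn:loc1}) gives each factor a bound $C\|\rho\|_\infty \tilde\ell^{\,d}|I|L^{-d}$. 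Since $|\Gamma|\le(\ell/\tilde\ell)^d$, the number of pairs is at most $(\ell/\tilde\ell)^{2d}$, so this contribution is at most $C'(\ell/L)^{2d}$, which is again absorbed into the first term of the bound.

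For alternative (2), I union bound over $\gamma\in\Gamma$ the probabilities
\[
\P\{(X_{5\tilde\ell(\gamma)}(\tilde I_L(E))=k_1)\cap(X_{5\tilde\ell(\gamma)}(\tilde J_L(E'))=k_2)\}.
\]
By the translation invariance of the distribution of the random potential (each $\Lambda_{5\tilde\ell(\gamma)}$ is a translate of the same cube and the $\omega_i$ are iid), these probabilities are independent of $\gamma$, so the sum is at most $|\Gamma|$ times the single probability with a fixed $\gamma$. Using $|\Gamma|\le(\ell/\tilde\ell)^d$ produces exactly the second term of the claimed inequality.

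The main technical point will be the independence in alternative (1): one needs the restricted Hamiltonians on disjoint subcubes to be measurable with respect to disjoint $\sigma$-algebras, which is immediate for the standard Anderson-type sum \eqref{eq:potential1} but requires a brief verification in the polymer case, where one must choose the covering $\Gamma$ so that boundary blocks of the partition \{P_i\} do not straddle two distinct $\Lambda_{\tilde\ell(\gamma)}$; the standard remedy is to take $\tilde\ell$ a multiple of the block size $m_k$, which is already part of the hypothesis of Proposition \ref{prop:decorrelation-prop1}. Apart from this bookkeeping, the proof is a direct Wegner--union-bound estimate plus Lemma \ref{lemma:loc1}.
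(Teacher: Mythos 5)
Your proposal is correct and follows essentially the same route as the paper: decompose over $\Omega_0$ and its complement, invoke the dichotomy of Lemma \ref{lemma:loc1}, bound alternative (1) by a union bound over disjoint pairs together with independence and the Wegner estimate (yielding an $(\ell/L)^{2d}$ contribution), and bound alternative (2) by a union bound over the $|\Gamma|\le(\ell/\tilde\ell)^d$ cubes. The extra remark about block alignment in the polymer case is harmless bookkeeping already covered by the divisibility hypothesis.
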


%

\begin{proof}
According to Lemma \ref{lemma:loc1}, we have
\bea\label{eq:reduction-to-log1}
\lefteqn{ \P \{ ( X_{\ell}(I_L(E))  = k_1 ) \cap ( X_{\ell}(J_L(E^\prime)) = k_2  ) \} } \nonumber
\\ & \leq & \P \{ ( X_{\ell }(I_L(E))  = k_1 ) \cap ( X_{\ell }(J_L(E^\prime)) = k_2  ) \cap \Omega_0 \} + \P \{ \Omega \backslash \Omega_0 \} \nonumber \\
 & \leq & L^{-2d} + \P_1 + \P_2 ,
 \eea
where, according to Lemma \ref{lemma:loc1}, $\P_1$ is the probability that option (1) occurs and $\P_2$ is the probability that option (2) occurs.
To estimate $\P_1$, we use the independence of the Hamiltonians associated with $\Lambda_{\tilde{\ell}(\gamma)}$ and  $\Lambda_{\tilde{\ell}(\gamma^\prime)}$, together with Wegner's estimate on scale $\tilde{\ell}$, to obtain
\bea\label{eq:reduction-to-log2}
\P_1 & \leq & \left(\frac{\ell}{\tilde{\ell}}   \right)^{2d} \P \{ ( X_{\tilde{\ell}(\gamma)}(\tilde{I}_L(E)) \geq 1 ) \cap ( X_{\tilde{\ell}(\gamma^\prime)}(\tilde{J}_L(E^\prime)) \geq 1) \} \nonumber \\
 & \leq & \left(\frac{\ell}{\tilde{\ell}}   \right)^{2d} \P \{  X_{\tilde{\ell}(\gamma)}(\tilde{I}_L(E))   \geq 1 \} ~ \P \{ X_{\tilde{\ell}(\gamma^\prime)}(\tilde{J}_L(E^\prime)) \geq 1 \} \nonumber \\
 & \leq & C_W^2 \left(\frac{\ell}{\tilde{\ell}}   \right)^{2d} ( | \tilde{I}_L(E) | \tilde{\ell}^d )(| \tilde{I}_L(E) | \tilde{\ell}^d) \nonumber \\
 & \leq & C_W^2 \left(\frac{\ell}{L}   \right)^{2d} .
\eea
In order to estimate $\P_2$, condition (2) implies
\beq\label{eq:reduction-to-log3}
\P_2  \leq \left(\frac{\ell}{\tilde{\ell}} \right)^{d} \P \{  (X_{5 \tilde{\ell}(\gamma)} (\tilde{I}_L(E)) = k_1 ) \cap (X_{5 \tilde{\ell}(\gamma)} (\tilde{J}_L(E^\prime)) = k_2) \} .
\eeq
Bounds  \eqref{eq:reduction-to-log1}--\eqref{eq:reduction-to-log3} imply the bound \eqref{eq:vanishing-prob1}.
\end{proof}



\subsection{Key estimate on the $\log$-scale}\label{subsec:smallest-length-scale1}

\noindent
The proof of the next key Proposition \ref{prop:vanishing-prob1} follows the ideas in \cite{klopp}.

\begin{pr}\label{prop:vanishing-prob1}
For $k_1, k_2 = 1, \ldots, m_k$ and for any $\beta > \frac{5}{2}$, there exists a scale $L_0 > 0$, so that for any
 $L > L_0$, there exists a constant $C_0> 0$ so that
, we have
\beq\label{eq:vanishing-prob2}
\P \{ ( X_{5\tilde{\ell}(\gamma)}(\tilde{I}_L(E))  = k_1 ) \cap ( X_{5\tilde{\ell}(\gamma)}(\tilde{J}_L(E^\prime)) = k_2  ) \} \leq
C_0 \frac{K \| \rho \|_\infty^2 m_k^{2 m_k} }{\Delta E - 4d}  \frac{ (C \log L)^{(2+\beta) d}}{L^{2 d}} .
\eeq
\end{pr}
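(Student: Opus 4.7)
Let $\Omega_{k_1,k_2}$ denote the event whose probability appears on the left-hand side of \eqref{eq:vanishing-prob2}. On $\Omega_{k_1,k_2}$ both weighted averages $\mathcal{T}(\omega)$ and $\mathcal{T}'(\omega)$ at scale $5\tilde{\ell}$ are defined; by property (1) of Section \ref{subsec:weight-trace1}, $\mathcal{T}(\omega)\in \tilde{I}_L(E)$ and $\mathcal{T}'(\omega)\in\tilde{J}_L(E')$, so Remark \ref{remark:t-separation} yields $|\mathcal{T}(\omega)-\mathcal{T}'(\omega)|>\Delta E - cL^{-d}\geq \Delta E$ for $L$ large. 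Then the variational bounds \eqref{eq:variation4}--\eqref{eq:variation5} of Section \ref{subsec:variation1} hold at scale $5\tilde{\ell}$, and in particular the $\ell^1$ lower bound $\|\nabla_\omega(\mathcal{T}-\mathcal{T}')\|_1\geq (\Delta E - 4d)/K$ is available.

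\textbf{From two energies to two random variables.} The goal is to show that, conditioned on the other random variables, we can find two sites $i\neq j$ in $\mathcal{J}\cap\Lambda_{5\tilde{\ell}}$ so that the map $(\omega_i,\omega_j)\mapsto(\mathcal{T}(\omega),\mathcal{T}'(\omega))$ has a non-degenerate Jacobian on a set of large measure. Since $\partial_{\omega_j}\mathcal{T}$ and $\partial_{\omega_j}\mathcal{T}'$ are non-negative and each sums to $1$ over $j$, while their difference has $\ell^1$ norm at least $(\Delta E-4d)/K$, a pigeonhole/convex-combination argument produces an ordered pair $(i,j)$ for which the $2\times 2$ Jacobian
\[
J_{ij}(\omega):=\partial_{\omega_i}\mathcal{T}(\omega)\,\partial_{\omega_j}\mathcal{T}'(\omega)-\partial_{\omega_j}\mathcal{T}(\omega)\,\partial_{\omega_i}\mathcal{T}'(\omega)
\]
is not identically small in $(\omega_i,\omega_j)$. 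I would then sum over the $O(\tilde{\ell}^{2d})=O((C\log L)^{2d})$ possible pairs, which accounts for two factors of $(\log L)^d$ in the final bound.

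\textbf{Harnack / polynomial control + change of variables.} By Section \ref{subsec:dependence-rv1}, $\mathcal{T}$ and $\mathcal{T}'$ are algebraic functions of $(\omega_i,\omega_j)$ whose defining polynomials have degree at most $m_k^{k_1}$ and $m_k^{k_2}$ respectively; consequently $J_{ij}$ is algebraic in $(\omega_i,\omega_j)$ of controlled degree. The Harnack curve theorem (as used in \cite{klopp}) then gives a quantitative bound of the form
\[
\mathrm{meas}\bigl\{(\omega_i,\omega_j)\in [-K,K]^2 : |J_{ij}|\leq \delta\bigr\}\leq C K^{2} m_k^{2m_k}\,\delta^{1/m_k^{2m_k}}.
\]
On the complement $\{|J_{ij}|\geq \delta\}$ I change variables $(\omega_i,\omega_j)\mapsto(\mathcal{T},\mathcal{T}')$ and use the bounded densities to estimate
\[
\P\bigl\{\Omega_{k_1,k_2}\cap\{|J_{ij}|\geq\delta\}\bigr\}\leq \|\rho\|_\infty^{2}\frac{|\tilde{I}_L(E)|\,|\tilde{J}_L(E')|}{\delta}\leq \frac{C\|\rho\|_\infty^{2}\,|I|\,|J|}{\delta\,L^{2d}}.
\]
Balancing the two contributions and optimizing $\delta$ as a negative power of $\tilde{\ell}$, together with the factor $1/(\Delta E-4d)$ coming from the gradient lower bound used to isolate the good pair $(i,j)$, absorbs everything into $\tilde{\ell}^{(2+\beta)d}=(C\log L)^{(2+\beta)d}$ for any $\beta>5/2$.

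\textbf{Main obstacle.} The delicate part is the quantitative Harnack step: I need the $\delta$-dependence of the measure of $\{|J_{ij}|\leq\delta\}$ with an exponent that depends only on $m_k$ (not on $k_1,k_2$ or on $\tilde{\ell}$), so that optimization in $\delta$ yields a power of $\tilde{\ell}$ rather than of $L$. This requires treating $\mathcal{T},\mathcal{T}'$ as symmetric functions of the roots of the characteristic polynomial of the tensorized Hamiltonian \eqref{eq:tensor-hamiltonian1}, whose degree in $(\omega_i,\omega_j)$ is bounded uniformly by a function of $m_k$, and then invoking Klopp's quantitative Harnack estimate along the same lines as \cite[\S 2]{klopp}; the residual factor $m_k^{2m_k}$ in \eqref{eq:vanishing-prob2} is exactly the combinatorial cost of this degree count.
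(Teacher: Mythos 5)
Your overall architecture is the right one and largely matches the paper's: reduce to the two-dimensional map $\varphi:(\omega_i,\omega_j)\mapsto(\mathcal{T},\mathcal{T}')$, use the $\ell^1$ lower bound \eqref{eq:variation5} together with normalization and positivity of the gradients to find a pair with a non-degenerate Jacobian, sum over the $O(\tilde{\ell}^{2d})$ pairs, change variables with the inverse Jacobian bounded by $1/\lambda(L)$, and use Harnack's curve theorem to control the algebraic degree in $(\omega_i,\omega_j)$. The factor $m_k^{2m_k}$ does indeed arise from the degree count, and the factor $K(C\log L)^{\beta d}/(\Delta E-4d)$ from the inverse Jacobian.

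However, your \textquotedblleft main obstacle\textquotedblright\ step is a genuine gap, and as formulated it cannot be repaired by optimization. You propose to bound $\mathrm{meas}\{|J_{ij}|\le\delta\}$ by $C K^2 m_k^{2m_k}\delta^{1/m_k^{2m_k}}$ and to balance this against the change-of-variables contribution $\|\rho\|_\infty^2 L^{-2d}/\delta$. These two terms cannot be balanced to yield the required $L^{-2d}$ decay: making the first term $O(L^{-2d})$ forces $\delta\lesssim L^{-2d\,m_k^{2m_k}}$, which makes the second term diverge, while taking $\delta$ a negative power of $\tilde{\ell}=C\log L$ leaves the first term only logarithmically small. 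The paper avoids this entirely: there is no measure estimate for the sublevel set of the Jacobian. Instead, Klopp's deterministic inequality \eqref{eq:jac-lb1} (for nonnegative $\ell^1$-normalized vectors, $\max_{i\ne j}J_{ij}^2\ge\tfrac{1}{4n^5}\|u-v\|_1^2$ with $n=(2\tilde{\ell}+1)^d$), combined with \eqref{eq:variation5}, shows that for \emph{every} $\omega$ in the event some pair satisfies $J_{ij}\ge\lambda(L)$ with $\lambda(L)=(\Delta E-4d)K^{-1}(C\log L)^{-\beta d}$, provided $\beta>5/2$; hence the event $\{J_{ij}<\lambda(L)\ \forall (i,j)\}$ is empty and only the good contribution survives. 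This is exactly where the threshold $\beta>5/2$ comes from. Correspondingly, Harnack's theorem is used in Lemma \ref{lemma:diffeom-new1} only to bound the \emph{cardinality} of $\varphi^{-1}(p)$ by $m_k^{2m_k}$ (the preimages are isolated by the inverse function theorem, so each connected component of the real algebraic zero set contributes one point), not to control the measure of a set where the Jacobian is small. If you replace your measure-of-small-Jacobian step by this deterministic emptiness argument, the rest of your outline goes through and reproduces \eqref{eq:vanishing-prob2}.
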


\begin{proof}
1. We begin with some observation concerning the eigenvalue averages.
We let $\Omega_0(\tilde{\ell}, k_1, k_2)$ denote the event
\beq\label{eq:joint-prob1}
\Omega_0(\tilde{\ell}, k_1, k_2) := \{ \omega ~|~  (X_{\tilde{\ell}}((\tilde{I}_L(E))) = k_1 ) \cap ( X_{\tilde{\ell}}((\tilde{J}_L(E^\prime))) = k_2) \} \cap \Omega_0 ,
\eeq
for $k_1, k_2 = 1, \ldots, m_k$, and where we write $\tilde{\ell}$ instead of $5 \tilde{\ell}(\gamma)$ to simplify the notation.
We define the subset $\Delta \subset \Lambda_{\tilde{\ell}} \times \Lambda_{\tilde{\ell}}$ by $\Delta := \{ (i,i) ~|~ i \in \Lambda_{\tilde{\ell}} \}$. For each pair of sites $(i,j) \in \Lambda_{\tilde{\ell}} \times \Lambda_{\tilde{\ell}} \backslash \Delta$, the Jacobian determinant of the mapping $\varphi: (\omega_i, \omega_j) \rightarrow
( \mathcal{T}_{\tilde{\ell}} (E, k_1), \mathcal{T}_{\tilde{\ell}} (E^\prime, k_2))$, given by:
\beq\label{eq:jacobian1}
     J_{ij}( \mathcal{T}_{\tilde{\ell}} (E,k_1), \mathcal{T}_{\tilde{\ell}}(E^\prime, k_2) ) :=     \left|    \begin{array}{cc}
                   \partial_{\omega_i}\mathcal{T}_{\tilde{\ell}} (E, k_1) & \partial_{\omega_j} \mathcal{T}_{\tilde{\ell}} (E, k_1) \\
                   \partial_{\omega_i} \mathcal{T}_{\tilde{\ell}} (E^\prime, k_2) & \partial_{\omega_j} \mathcal{T}_{\tilde{\ell}} (E^\prime, k_2)
                    \end{array}   \right| .
\eeq
As we will show in section \ref{subsec:diffeom1}, the condition $J_{ij}( \mathcal{T}_{\tilde{\ell}} (E,k_1), \mathcal{T}_{\tilde{\ell}}(E^\prime, k_2) ) \geq \lambda (L) > 0$ implies that the average of the eigenvalues in $\tilde{I}_L(E)$ and $\tilde{J}_L(E^\prime)$ effectively vary independently with respect to any pair of independent random variables $(\omega_i, \omega_j)$, for $i \neq j$. We define the following events for pairs $(i,j) \in \Lambda_{\tilde{\ell}} \times \Lambda_{\tilde{\ell}} \backslash \Delta$:
\beq\label{eq:pair-event1}
\Omega_0^{i,j} ({\tilde{\ell}}, k_1, k_2) := \Omega_0({\tilde{\ell}}, k_1, k_2) \cap \{ \omega ~|~ J_{ij}( \mathcal{T}_{\tilde{\ell}} (E,k_1), \mathcal{T}_{\tilde{\ell}}(E^\prime, k_2) ) \geq \lambda (L) \},
\eeq
where $\lambda (L) > 0$ is given by
\beq\label{eq:defn-lambda1}
\lambda(L) :=     ( \Delta E - 4d) K^{-1} (C \log L)^{- \beta d}, 
\eeq
where the exponent $\beta > 0$ is chosen below.
Following Klopp \cite[pg.\ 242]{klopp}, we note in section \ref{subsec:diffeom1} that the positivity of the Jacobian determinant insures that the map $\varphi$, restricted to a certain domain, is a diffeomorphism. In particular, this allows us to compute $\P \{ \Omega_0^{i,j} ({\tilde{\ell}}, k_1, k_2) \}$ as in Lemma \ref{lemma:diffeom-new1}.

\noindent
2. We next bound $\P \{ \Omega_0 ({\tilde{\ell}}, k_1, k_2) \}$ in terms of $\P \{ \Omega_0^{i,j} ({\tilde{\ell}}, k_1, k_2) \}$ using
\cite[Lemma 2.5]{klopp}. This lemma states that for $(u,v ) \in (\R^+)^{2n}$ normalized so that $\| u \|_1 = \| v \|_1 = 1$, we have
\beq\label{eq:jac-lb1}
\max_{j \neq k} \left| \begin{array}{cc}
                            u_j & u_k \\
                            v_j & v_k
                            \end{array} \right|^2 \geq \frac{1}{4 n^5} \| u-v \|_1^2.
\eeq
Applying this with $n = (2 \ell + 1)^d$, and  $u = \nabla_\omega \mathcal{T}(\omega)$ and $v = \nabla_{\omega} \mathcal{T}^\prime (\omega)$,
and recalling the positivity property mentioned in point (2) following from \eqref{eq:tr-deriv2}
and the normalization in point (3) of section \ref{subsec:weight-trace1}, we obtain from \eqref{eq:jac-lb1} and \eqref{eq:variation5}:
\bea\label{eq:jac-lb2}
\max_{i \neq j \in \Lambda_{\tilde{\ell}}} J_{ij} ( \mathcal{T}_{\tilde{\ell}} (E), \mathcal{T}_{\tilde{\ell}} (E^\prime))^2  &\geq & \left( \frac{2^3}{ {\tilde{\ell}}^{5d}} \right)
  \| \nabla_\omega ( \mathcal{T}_{\tilde{\ell}} (E) - \mathcal{T}_{\tilde{\ell}} (E^\prime)) \|_1^2 \nonumber \\
 & \geq &
\left( \frac{\Delta E - 4d}{K} \right)^2 \left( \frac{2^3}{ {\tilde{\ell}}^{5d}} \right) .
\eea
We partition the probability space as $\{ \omega ~|~ J_{ij} \geq \lambda(L) ~{\rm some} ~(i,j) \in \Lambda_{\tilde{\ell}} \times \Lambda_{\tilde{\ell}} \backslash \Delta \} \cup \{ \omega ~|~ J_{ij} < \lambda(L) ~\forall ~(i,j) \in \Lambda_{\tilde{\ell}} \times \Lambda_{\tilde{\ell}} \backslash \Delta \}$, where we write $J_{ij}$ for the Jacobian $J_{ij} ( \mathcal{T}_{\tilde{\ell}} (E), \mathcal{T}_{\tilde{\ell}} (E^\prime))$.
Suppose that the second event $\{ \omega ~|~ J_{ij} < \lambda(L) ~\forall ~(i,j) \in \Lambda_{\tilde{\ell}} \times \Lambda_{\tilde{\ell}} \backslash \Delta \}$ occurs, so that from \eqref{eq:jac-lb2}, we have:
\bea\label{eq:jac-lb3}
\lefteqn{ \lambda(L)^2 = \left( \frac{\Delta E - 4 d}{K (C \log L)^{\beta d}} \right)^2 \geq  \max_{i \neq j \in \Lambda_{\tilde{\ell}}} J_{ij} ( \mathcal{T}_{\tilde{\ell}} (E), \mathcal{T}_{\tilde{\ell}} (E^\prime))^2 } \nonumber \\
  &\geq & \left( \frac{2^3}{ {\tilde{\ell}}^{5d}} \right)
  \| \nabla_\omega ( \mathcal{T}_{\tilde{\ell}} (E) - \mathcal{T}_{\tilde{\ell}} (E^\prime)) \|_1^2 .
\eea
Taking ${{\tilde{\ell}}} = C \log L$, this implies that
\beq\label{eq:gradients1}
    \| \nabla_\omega ( \mathcal{T}_{\tilde{\ell}} (E) - \mathcal{T}_{\tilde{\ell}} (E^\prime)) \|_1^2 \leq C_1 (C \log L)^{(5 - 2 \beta)d}. 
\eeq
So, provided $\beta > 5/2$, 
we find that the bound \eqref{eq:gradients1} implies that the $\nabla_\omega \mathcal{T}_{\tilde{\ell}} (E)$ is almost collinear with $\nabla_\omega \mathcal{T}_{\tilde{\ell}} (E^\prime)$. This contradicts the lower bound \eqref{eq:variation5} as long as $\Delta E - 4d > 0$. Consequently, the probability of the second event is zero.

\noindent
3. It follows from this, the partition of the probability space, and Lemma \ref{lemma:diffeom-new1} that
\bea\label{eq:independent-est2}
 \P \{ \Omega_0 (\tilde{\ell}, k_1, k_2) \} & \leq & \sum_{(i,j) \in \Lambda_{\tilde{\ell}} \times \Lambda_{\tilde{\ell}} \backslash \Delta} \P \{ \Omega_0^{i,j} (\tilde{\ell}, k_1, k_2) \} \nonumber \\
  & \leq & {\tilde{\ell}}^{2d} C_0 \frac{K \| \rho \|_\infty^2 m_k^{2 m_k} }{\Delta E - 4d}  \frac{ (C \log L)^{\beta d}}{L^{2 d}} .
  \eea
With $\tilde{\ell} = C \log L$,
we find the probability
$\P \{ \Omega_0 (\tilde{\ell}, k_1, k_2) \}$ is bounded as
 \beq\label{eq:independent-est3}
  \P \{ \Omega_0 (\tilde{\ell}, k_1, k_2) \} \leq
  C_0 \frac{K \| \rho \|_\infty^2 m_k^{2 m_k} }{\Delta E - 4d}  \frac{ (C \log L)^{(2+\beta) d}}{L^{2 d}} ,
  \eeq
Replacing $\tilde{\ell}$ by $5 {\tilde{\ell}}(\gamma)$, changing the constant $C_0$,  this completes the proof of Proposition \ref{prop:vanishing-prob1}.
\end{proof}

In summary, Proposition \ref{prop:vanishing-prob1} shows that
$$
\P \{ \Omega_0 (5\tilde{\ell}(\gamma), k_1, k_2) \} \rightarrow 0, {\rm as} ~~L \rightarrow 0.
$$
As a consequence of this and \eqref{eq:reduction-to-log1}--\eqref{eq:reduction-to-log3}, there exist constants $C_0, C_1 > 0$ such that for all $L >>0$,
\bea\label{eq:vanishing4}
\lefteqn{\P \{ (X_\ell(I_L(E)) = k_1 ) \cap ( X_\ell(J_L(E^\prime)) = k_2) \} } \nonumber \\
 &\leq & C_1 \left( \frac{1}{L^{1-\alpha}} \right){2d} + C_0 \frac{K \| \rho \|_\infty^2 m_k^{2 m_k} }{\Delta E - 4d}  \frac{ (C \log L)^{(1+\beta) d}}{L^{(2-\alpha) d}},
\eea
showing that $\P \{ (X_\ell(I_L(E)) = k_1 ) \cap ( X_\ell(J_L(E^\prime)) = k_2) \}  \rightarrow 0$ as $L \rightarrow \infty$,
for any integers $k_1, k_2 = 1, \ldots, m_k$.
This proves, up to the proof of the diffeomorphism property of $\varphi$, the main result \eqref{eq:prob1}.

\subsection{Estimate of $\P \{ \Omega_0^{i,j} (\tilde{\ell}, k_1, k_2) \}$}
\label{subsec:diffeom1}

Let $\Omega_0(\tilde{\ell}, k_1, k_2), k_1, k_2 = 1, \ldots, m_k$ be the set of configurations described in \eqref{eq:joint-prob1}. Similarly, for any pair of sites $(i,j) \in \Lambda_{\tilde{\ell}}
\times \Lambda_{\tilde{\ell}} \backslash \Delta$, the Jacobian determinant
$J_{ij}( \mathcal{T}_{\tilde{\ell}} (E,k_1), \mathcal{T}_{\tilde{\ell}}(E^\prime, k_2) )$ is defined in equation (\ref{eq:jacobian1}).
We also defined events $\Omega_0^{i,j} ({\tilde{\ell}}, k_1, k_2)$, for pairs $(i,j) \in \Lambda_{\tilde{\ell}} \times \Lambda_{\tilde{\ell}} \backslash \Delta$, in \eqref{eq:pair-event1}:
\beq\label{eq:pair-event2}
\Omega_0^{i,j} ({\tilde{\ell}}, k_1, k_2) := \Omega_0({\tilde{\ell}}, k_1, k_2) \cap \{ \omega ~|~ J_{ij}( \mathcal{T}_{\tilde{\ell}} (E,k_1), \mathcal{T}_{\tilde{\ell}}(E^\prime, k_2) ) \geq \lambda(L) \},
\eeq
where $\lambda (L) > 0$ has the value
\beq\label{eq:lambda1}
\lambda (L) := \frac{\Delta E - 4d}{K ( C \log L)^{ d \beta}} ,
\eeq
for some $\beta > \frac{5}{2}$.
We present an important technical lemma that is a simplification of \cite[Lemma 2.6]{klopp}.

\begin{lemma}\label{lemma:diffeom-new1} 
For all $L$ large, there is a finite constant $C_0 > 0$, independent of $L$, so that
$$
\P \{ \Omega_0^{ij} ( {\tilde{\ell}}, k_1, k_2) \} \leq  C_0 \frac{K \| \rho \|_\infty^2 m_k^{2 m_k} }{\Delta E - 4d}  \frac{ (C \log L)^{\beta d}}{L^{2 d}},
$$
for any $\beta > \frac{5}{2}$.
\end{lemma}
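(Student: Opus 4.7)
The plan is to condition on all random variables $\omega_k$ with $k \neq i,j$ and estimate the conditional probability via a change-of-variables argument applied to the map
\[
\varphi : (\omega_i,\omega_j)\longmapsto\bigl(\mathcal{T}_{\tilde{\ell}}(E,k_1),\,\mathcal{T}_{\tilde{\ell}}(E',k_2)\bigr).
\]
On $\Omega_0^{ij}(\tilde{\ell},k_1,k_2)$, we have $J_{ij}(\varphi)\geq\lambda(L)$, so $\varphi$ is a local diffeomorphism, and property (1) of Section \ref{subsec:weight-trace1} confines its image to the rectangle $\tilde{I}_L(E)\times\tilde{J}_L(E')$ of Lebesgue measure at most $C\,L^{-2d}|I||J|$.

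First, I would freeze the variables $\{\omega_k\}_{k\neq i,j}$ and work in the $(\omega_i,\omega_j)$-plane. Since each density is bounded by $\|\rho\|_\infty$, the conditional probability of $\Omega_0^{ij}$ is at most $\|\rho\|_\infty^2$ times the Lebesgue measure of
\[
\bigl\{(\omega_i,\omega_j)\in[-K,K]^2:\varphi(\omega_i,\omega_j)\in\tilde{I}_L(E)\times\tilde{J}_L(E'),\ J_{ij}\geq\lambda(L)\bigr\}.
\]
Applying the area formula on the set where the Jacobian is bounded below, this Lebesgue measure is at most
\[
\frac{N}{\lambda(L)}\,|\tilde{I}_L(E)|\,|\tilde{J}_L(E')|,
\]
where $N$ denotes the maximum number of preimages of a generic point under $\varphi$.

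Second, I would bound the preimage count by $N \leq m_k^{2m_k}$ using the algebraic structure of Section \ref{subsec:dependence-rv1}. The weighted trace $\mathcal{T}_{\tilde{\ell}}(E,k_1)$ is a genuine eigenvalue of the tensor-product Hamiltonian $H_{\omega,\tilde{\ell}}^{k_1}$, whose characteristic polynomial is a polynomial of degree at most $m_k^{k_1}$ in $(\omega_i,\omega_j)$ once the remaining variables are frozen. Hence each level set $\{\mathcal{T}_{\tilde{\ell}}(E,k_1)=c_1\}$ lies on a real algebraic curve of degree at most $m_k^{m_k}$, and similarly for $\mathcal{T}_{\tilde{\ell}}(E',k_2)$. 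By Bezout's theorem (equivalently, the Harnack curve bound referenced in Section \ref{subsec:dependence-rv1}), two such distinct curves meet in at most $m_k^{2m_k}$ points, giving the desired bound on $N$.

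Third, combining the previous two steps and integrating out the frozen random variables, whose marginal densities integrate to one, yields
\[
\P\{\Omega_0^{ij}(\tilde{\ell},k_1,k_2)\}\leq C\,\frac{\|\rho\|_\infty^2\,m_k^{2m_k}\,|I||J|}{\lambda(L)\,L^{2d}}.
\]
Substituting $\lambda(L)^{-1} = K(C\log L)^{\beta d}/(\Delta E-4d)$ and absorbing $|I||J|$ into a single constant $C_0$ produces the claimed inequality. The main obstacle I anticipate is making the preimage count $N \leq m_k^{2m_k}$ fully rigorous. In the rank-one case the eigenvalues are themselves algebraic of small degree, whereas here $\mathcal{T}_{\tilde{\ell}}(E,k_1)$ is a priori only the average of a \emph{selected} subset of eigenvalues, so naively $\varphi$ is piecewise defined with the selection changing as $(\omega_i,\omega_j)$ varies. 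The tensor-product device of Section \ref{subsec:dependence-rv1} is essential because it realizes $\mathcal{T}_{\tilde{\ell}}(E,k_1)$ as a true eigenvalue of an enlarged operator, restoring a single global polynomial equation of controlled degree and hence the applicability of Bezout.
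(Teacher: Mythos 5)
Your proposal is correct and follows essentially the same route as the paper: the change of variables under $\varphi$ with inverse Jacobian bounded by $\lambda(L)^{-1} = K(C\log L)^{\beta d}/(\Delta E - 4d)$, the target rectangle of measure $O(L^{-2d})$, and the preimage count $m_k^{2m_k}$ obtained from the tensor-product Hamiltonian together with the Harnack/Bezout degree bound. The only cosmetic difference is that you invoke the area formula directly after conditioning, whereas the paper organizes the same computation via a compact exhaustion, the Inverse Function Theorem, and a finite disjoint subcover.
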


\begin{proof}
\noindent
1. Since $\Omega_0^{ij} ( {\tilde{\ell}}, k_1, k_2) \subset \Omega$ is measurable and bounded, we can find a compact set $K \subset \Omega_0^{ij} ({\tilde{\ell}}, k_1, k_2 )$ so that
\beq\label{eq:int-reg1}
\P \{ \Omega_0^{i,j}({\tilde{\ell}}, k_1, k_2)  \backslash K \} \leq C_1 L^{- 2d} .
\eeq
We define the map $\varphi : \Omega_0^{i,j}({\tilde{\ell}}, k_1, k_2) \rightarrow \R^2$ by
$$
\varphi (\omega_i, \omega_j) = ( \mathcal{T}_{\tilde{\ell}} (E, k_1, \omega), \mathcal{T}_{\tilde{\ell}} (E^\prime, k_2,\omega)) .
$$
This map is continuous so $\varphi (K) \subset \tilde{I}_L(E) \times \tilde{J}_L(E^\prime)$ is compact.

\noindent
2. For each $p \in \varphi (K)$, we choose any element $\omega_{ij}(p) \in K$ in the pre-image of $p$ under $\varphi^{-1}$: $\omega_{ij}(p) \in \varphi^{-1} (p) \in K \subset \Omega_0^{i,j}({\tilde{\ell}}, k_1, k_2)$. Because the Jacobian of $\varphi$ is bounded below at each point of $\varphi^{-1}(p)$, as follows from the definition of $\Omega_0^{i,j}({\tilde{\ell}}, k_1, k_2)$, the Inverse Function Theorem states that there are open balls $U_{\omega_{ij}(p)} \subset \Omega_0^{i,j}({\tilde{\ell}}, k_1, k_2)$ and $V_p \subset \tilde{I}_L(E)\times \tilde{J}_L(E^\prime)$, with $\omega_{ij}(p) \in U_{\omega_{ij}(p)}$ and $p \in V_p$  so that the restriction $\varphi$ to $U_{\omega_{ij}(p)}$ is a diffeomorphism with $V_p$.
As a consequence, the point $\omega_{ij}(p) \in K$ is the unique point in $U_{\omega_{ij}(p)}$ such that $\varphi (\omega_{ij}) = p$ so that such points are isolated points of $K$. It follows that $\varphi^{-1}(p)$ is a discrete subset of $K$.

\noindent
3. We can apply Harnack's Curve Theorem \cite{Harnack1}
in order to obtain an upper bound on the number of points in $\varphi^{-1}(p)$. In section \ref{subsec:dependence-rv1}, we showed that $\mathcal{T}_\omega^{\tilde{\ell}} ( k_1)$ is a zero of a determinant constructed from the tensor product operator $H_{\omega,{\tilde{\ell}}}^{k_1}$. For fixed $E \in \R$, this determinant defines the function $f_E^{k_1}(\omega_i, \omega_j) := \det ( H_{\omega,{\tilde{\ell}}}^{k_1} - E)$ that is a polynomial of degree at most $m_k^{k_1}$ in each random variable $\omega_i$ and $\omega_j$. As such, the polynomial  $f_E^{k_1}(\omega_i, \omega_j)$ may be extended to $\R^2$. For two distinct energies $e \neq e^\prime$, we consider the polynomial $f_{e,e^\prime}^{k_1,k_2}(\omega_1, \omega_2) := [f_e^{k_1}(\omega_i, \omega_j)]^2 + [f_{e^\prime}^{k_1}(\omega_i, \omega_j) ]^2$.  The Harnack Curve Theorem states that the maximum number of connected components of the zero set of $f_{e,e^\prime}^{k_1,k_2}$ is bounded above by $\max (m_k^{2 k_1}, m_k^{2 k_2}) $.
Each of those connected components lying in $K$ is necessarily zero dimensional by the above diffeomorphism argument.
Hence, since $k_1, k_2 \leq m_k$, the number of points in the set $\varphi^{-1}(p)$ in $K$ is bounded above by $m_k^{2 m_k}$, independent of $L$. 

\noindent
4. The sets $\{ V_p \}_{p \in \varphi (K)}$ cover $\varphi (K)$. Since $\varphi (K)$ is compact, there is a finite subcover $\{ V_{p_{t}} \}_{t = 1}^N$ so that
$$
\varphi (K) \subset \cup_{t = 1}^N V_{p_t} \subset \tilde{I}_L(E) \times \tilde{J}_L(E^\prime ).
$$
The restriction of $\varphi$ to $U_{\omega_{ij}(p_t)}$, a diffeomorphism with $V_{p_t}$, is denoted by $\varphi_{p_t}$.
We take intersections and relative complements to obtain a finite collection $\{ W_m \}$ of {\it disjoint} sets so that
$$
\varphi (K) \subset \cup_{m = 1}^{\tilde{N}} W_m = \cup_{t = 1}^{N} V_{p_t},
$$
up to a set of Lebesgue measure zero, and where $\tilde{N}$ is a function of $N$, and each $W_m \subset V_{p_{t}}$, for some index $t$. We can choose
$W_m$ so that it is in the domain of $\varphi_{{p_t}}^{-1}$.

\noindent
5. We compute the $\P$-measure of $\Omega_0^{i,j}({\tilde{\ell}}, k_1, k_2)$ by first computing the measure of $\varphi_{p_t}^{-1} ( W_m)$:
\beq\label{eq:volume1}
\P \{ \varphi_{p_t}^{-1} ( W_m) \} = \int_{\varphi_{p_t}^{-1}(W_m)} ~\rho (\omega_i) \rho(\omega_j) ~d \omega_i ~d \omega_j.
\eeq
Upon changing variables, we obtain
\beq\label{eq:volume2}
\int_{\varphi_{p_t}^{-1}(W_m)} ~\rho (\omega_i) \rho (\omega_j) ~d  \omega_i ~d \omega_j
\leq | {\rm Jac} \varphi_{p_t}^{-1}| \| \rho\|_\infty^2 \int_{W_m} ~d E ~d {E^\prime} .
\eeq
It follows from \eqref{eq:lambda1} that the Jacobian $| {\rm Jac} \varphi_{p_t}^{-1}|$ satisfies the bound
\beq\label{eq:inv-jac11}
| {\rm Jac} \varphi_{p_t}^{-1}(\mathcal{T}_{\tilde{\ell}} (\omega_{ij}), \mathcal{T}_{\tilde{\ell}}^\prime (\omega_{ij}))| \leq \frac{K (C \log L)^{\beta d}}{\Delta E - 4d}.
\eeq
Using this bound, the fact that the $W_m$ are disjoint, and the fact that there are at most $m_k^{2 m_k}$ isolated points in $\varphi^{-1}(p)$, for $p \in \varphi (K)$, we find
\bea\label{eq:volume3}
\P \{ K \} & \leq & \| \rho \|_\infty^2 m_k^{2 m_k} \left( \max_{n} | {\rm Jac} \varphi_n^{-1} | \right) \left[ \sum_{m = 1}^{\tilde{N}} |  W_m |   \right]  \nonumber \\
 & \leq & \frac{K (C \log L)^{\beta d}}{\Delta E - 4d} ~ \rho \|_\infty^2 | ~m_k^{2 m_k} ~\| \tilde{I}_L(E) \times \tilde{J}_L(E^\prime) | \nonumber \\
  & \leq &  \frac{K \| \rho \|_\infty^2 m_k^{2 m_k} }{\Delta E - 4d}  \frac{ (C \log L)^{\beta d}}{L^{2 d}}
\eea
Finally, we have
\beq\label{eq:volume4}
\P \{ \Omega_0^{i,j} ({\tilde{\ell}}, k_1, k_2) \} \leq \P \{ K \} + \P \{ \Omega_0^{i,j} ({\tilde{\ell}}, k_1, k_2) \backslash K  \},
\eeq
so the result follows from \eqref{eq:volume3} and the fact that $\P \{ \Omega_0^{i,j} ({\tilde{\ell}}, k_1, k_2) \backslash K  \}$ is $\mathcal{O}(L^{-2d})$.
\end{proof}

\section{Asymptotically independent random variables: Proof of Theorem \ref{thm:decorrelation-lattice1}}\label{sec:proof1}
\setcounter{equation}{0}

In this section, we give the proof of Theorem \ref{thm:decorrelation-lattice1}.
To prove that $\xi_E^\omega(I)$ and $\xi_{E^\prime}^\omega (J)$ are independent, we recall that the limit points $\xi_E^\omega$ are the same as
those obtained from a certain uniformly asymptotically negligible array (\cite[Proposition 4.4]{hislop-krishna1}). To obtain this array,
we construct a cover of $\Lambda_L$ by non-overlapping cubes of side length $2 \ell + 1$ centered at points $n_p$. We use $\ell = [L^\alpha]$,
where $(\alpha, \beta)$ satisfy $0 < \alpha < 1$ and $\beta > \frac{5}{2}$.
The number of such cubes $\Lambda_\ell (n_p)$ is $N_L := [ (2 L+1) / (2\ell +1)]^d$. The local Hamiltonian is $H^\omega_{p, \ell}$. The associated eigenvalue point process at energy $E$ is denoted by $\eta_{E,\ell,p}^\omega$. We define the point process $\zeta^\omega_{E,\Lambda_L} = \sum_{p=1}^{N_L} \eta^\omega_{E,p, \ell}$.
For a bounded interval $I \subset \R$, we define the local random variable $\eta^\omega_{E,\ell,p}(I) := {\rm Tr} ( E_{H^\omega_{p,\ell}}(I_L(E)))$ and similarly $\eta^\omega_{E^\prime,\ell,p}(J)$ for the scaled interval $J_L(E^\prime)$.
For $p \neq p'$, the random variables $\eta^\omega_{E,\ell,p}(I)$ and $\eta^\omega_{E^\prime,\ell,p^\prime}(J)$ are independent for any energies $E$ and $E^\prime$ and any bounded intervals $I$ and $J$.
We compute
\bea\label{eq:independent1}
\P \{ (\zeta^\omega_{E,\Lambda_L}(I) \geq 1) \cap   (\zeta^\omega_{E^\prime \Lambda_L}(J) \geq 1) \} &=& \sum_{p,p^\prime = 1}^{N_L} \P \{ (\eta^\omega_{E, \ell,p}(I) \geq 1) \cap   (\eta^\omega_{E^\prime, \ell,p^\prime}(J) \geq 1) \} \nonumber \\
 &= & \sum_{p, p^\prime = 1}^{N_L} \P \{ \eta^\omega_{E,\ell,p}(I) \geq 1 \} \P \{  \eta^\omega_{E^\prime, \ell,p^\prime}(J) \geq 1 \} \nonumber \\
 & & + \mathcal{E}_L(E, E^\prime, I, J) ,
 \eea
where the error term is just the diagonal $p=p'$ contribution:
\bea\label{eq:error1}
\mathcal{E}_L(E, E^\prime, I, J)  & = &   \sum_{p = 1}^{N_L} \left[ \P \{ (\eta^\omega_{E,\ell,p}(I) \geq 1) \cap
 (\eta^\omega_{E^\prime, \ell,p}(J) \geq 1) \} \right.             \nonumber \\
 &  & \left. -
 \P \{ \eta^\omega_{E,\ell,p}(I) \geq 1 \} \P \{ \eta^\omega_{E^\prime,\ell,p}(J) \geq 1 \} \right].
 \eea

If we now assume that $|E - E^\prime| > 4d$ and $E, E^\prime \in \Sigma_{\rm CL}$, then the first term on the
right side of \eqref{eq:error1} is bounded above by $L^{-d} (\log L)^{(1+\beta)d}$ due to the decorrelation estimate \eqref{eq:prob1}.
The bound on the second probability on the right of \eqref{eq:error1} is $C_W^2 L^{-2d (1-\alpha)}$. It is obtained from the square of the Wegner estimate
$$
\P \{ \eta^\omega_{E^\prime, \ell, p}(J) \geq 1 \} \leq C_W (\ell / L)^d = C_W L^{-d(1 - \alpha)}.
$$is bounded
Since $N_L \sim (L / \ell)^d = L^{(1- \alpha)d}$, we find that the second term
on the right of \eqref{eq:error1} above by $C_W^2 L^{-d(1- \alpha)}$.
Consequently, the error term $\mathcal{E}_L(E, E^\prime, I, J) \rightarrow 0$ as $L \rightarrow \infty$.
Since the set of limit points $\zeta^\omega$ and $\xi^\omega$ are the same \cite{hislop-krishna1}, this estimate proves that
\beq\label{eq:independent2}
\lim_{L \rightarrow \infty}   \P \{ (\zeta^\omega_{E,\Lambda_L}(I) \geq 1) \cap   (\zeta^\omega_{E^\prime,\Lambda_L}(J) \geq 1) \} = \P \{ \xi^\omega_E(I) \geq 1 \} \P \{ \xi^\omega_{E^\prime} (J) \geq 1 \} ,
\eeq
establishing the asymptotic independence of the random variables $\xi^\omega_E(I)$ and $\xi^\omega_{E^\prime}(J)$ provided $|E - E^\prime | > 4d$.

\section{Bounds on eigenvalue multiplicity}\label{sec:multiplicity1}
\setcounter{equation}{0}

The extended Minami estimate may be used with the Klein-Molchanov argument \cite{klein-molchanov} to bound the multiplicity of eigenvalues in the localization regime. The basic argument of Klein-Molchanov is the following. If $H_\omega$ has at least $m_k +1$ linearly independent eigenfunctions with eigenvalue $E$ in the localization regime, so that the eigenfunctions exhibit rapid decay, then any finite volume operator $H_{\omega, L}$ must have at least $m_k + 1$ eigenvalues close to $E$ for large $L$. But, by the extended Minami estimate, this event occurs with small probability.
The first lemma is a deterministic result based on perturbation theory.

\begin{lemma}\label{lemma:local-ef-est1}
Suppose that $E \in \sigma (H)$ is an eigenvalue of a self adjoint operator $H$ with multiplicity at least $m_k + 1$. Suppose that all the associated eigenfunctions decay faster than $\langle x \rangle^{- \sigma}$, for some $\sigma > d / 2 > 0$. We define $\epsilon_L := C L^{- \sigma + \frac{d}{2}}$. Then for all $L>>0$, the local Hamiltonian $H_{L} := \chi_{\Lambda_L} H \chi_{\Lambda_L}$ has at least $m_k + 1$ eigenvalues in the interval $[E-  \epsilon_L, E+  \epsilon_L]$.
\end{lemma}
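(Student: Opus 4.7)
The plan is to use $m_k+1$ orthonormal eigenfunctions $\psi_1,\dots,\psi_{m_k+1}$ of $H$ at energy $E$ as quasi-modes for $H_L$ by restricting them to $\Lambda_L$. Set $\phi_j := \chi_{\Lambda_L}\psi_j$ and $V_L := \mathrm{span}\{\phi_1,\dots,\phi_{m_k+1}\}$. The argument proceeds in three steps: first, show that the $\phi_j$ are approximately orthonormal so that $\dim V_L = m_k+1$; second, show that each $\phi_j$ is an approximate eigenvector of $H_L$ at $E$ with residual much smaller than $\epsilon_L$; third, invoke the Courant--Fischer min-max principle applied to the nonnegative operator $(H_L - E)^2$.

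For the first step, the polynomial decay $|\psi_j(x)| \leq C\langle x\rangle^{-\sigma}$ with $\sigma > d/2$ gives
\[
\|\psi_j - \phi_j\|^2 \;=\; \sum_{x \notin \Lambda_L} |\psi_j(x)|^2 \;\leq\; C\sum_{|x|>L} \langle x\rangle^{-2\sigma} \;\leq\; C' L^{d-2\sigma},
\]
so $\|\psi_j - \phi_j\| \leq C L^{-\sigma + d/2}$. By Cauchy--Schwarz and $\|\psi_i\|=1$,
\[
|\langle \phi_i,\phi_j\rangle - \delta_{ij}| \;=\; |\langle \psi_i,(I-\chi_{\Lambda_L})\psi_j\rangle| \;\leq\; \|\psi_j - \phi_j\| \;\leq\; C L^{-\sigma + d/2}.
\]
Hence the Gram matrix $G := (\langle \phi_i,\phi_j\rangle)$ satisfies $\|G - I\| = o(1)$, so $G$ is invertible and $V_L$ has dimension $m_k+1$ for $L$ sufficiently large.

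For the second step, since $(H-E)\psi_j = 0$ and $H_L = \chi_{\Lambda_L} H \chi_{\Lambda_L}$,
\[
(H_L - E)\phi_j \;=\; \chi_{\Lambda_L}(H-E)\chi_{\Lambda_L}\psi_j \;=\; \chi_{\Lambda_L}[H,\chi_{\Lambda_L}]\psi_j.
\]
Because $H$ has finite hopping range, the commutator $[H,\chi_{\Lambda_L}]$ is supported on a one-site boundary layer near $\partial \Lambda_L$, on which $|\psi_j| = O(L^{-\sigma})$; summing over the $O(L^{d-1})$ boundary sites yields
\[
\|(H_L - E)\phi_j\| \;\leq\; C L^{-\sigma + (d-1)/2},
\]
which is strictly smaller than $\epsilon_L = C L^{-\sigma + d/2}$ for $L$ large. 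Combined with the approximate orthonormality of step one, an expansion $v = \sum c_j \phi_j \in V_L$ with $\|v\|=1$ satisfies $\|c\|_2 = 1 + o(1)$, so by the triangle inequality and Cauchy--Schwarz, $\|(H_L - E)v\| \leq \sqrt{m_k+1}\,\max_j \|(H_L-E)\phi_j\| \leq \tfrac{1}{2}\epsilon_L$ (after adjusting the constant in $\epsilon_L$).

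Finally, the min-max principle applied to $(H_L - E)^2$ gives that its $(m_k+1)$-th eigenvalue, in increasing order, is bounded by $\sup_{v \in V_L,\,\|v\|=1}\|(H_L-E)v\|^2 \leq \epsilon_L^2$. Equivalently, $H_L$ has at least $m_k+1$ eigenvalues in $[E-\epsilon_L, E+\epsilon_L]$, completing the proof. The only delicate point is bookkeeping the constants so that the commutator residual plus the Gram perturbation fit within $\epsilon_L$; this is not a substantive obstacle because the commutator exponent $-\sigma + (d-1)/2$ is strictly better than the target rate $-\sigma + d/2$, leaving room to absorb constants into $C$.
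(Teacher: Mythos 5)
Your proposal is correct and follows essentially the same route as the paper: restrict an orthonormal basis of the eigenspace to $\Lambda_L$, verify approximate orthonormality and that the restrictions are quasi-modes with residual $O(\epsilon_L)$, and convert an $(m_k+1)$-dimensional space of quasi-modes into $m_k+1$ eigenvalues of $H_L$ near $E$. The only (immaterial) differences are in the last step, where you apply min--max to $(H_L-E)^2$ while the paper shows the spectral projector onto $[E-3\epsilon_L,E+3\epsilon_L]$ is injective on $V_L$, and in your explicit commutator estimate $\|(H_L-E)\phi_j\|\le CL^{-\sigma+(d-1)/2}$, which is slightly sharper than the bound the paper imports from Klein--Molchanov.
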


\begin{proof}
1. Let $\{ \varphi_j ~|~ j = 1, \ldots , M \}$ be an orthonormal basis of the eigenspace for $H$ and eigenvalue $E$. We assume that the eigenvalue multiplicity $M \geq m_k + 1$. We define the local functions $\varphi_{j , L} := \chi_{\Lambda_L} \varphi_j$, for $j = 1 , \ldots, M$.
These local functions satisfy:
\bea\label{eq:local-ef1}
1 - \epsilon_L & \leq & \| \varphi_{j,L}  \|  \leq   1 , \nonumber \\
 | \langle \varphi_{i,L} , \varphi_{j, L} \rangle | & \leq & \epsilon_L, ~~~~i \neq j .
\eea
It is easy to check that these conditions imply that the family is linearly independent. Let $V_L$ denote the $M$-dimensional subspace of $\ell^2 (\Lambda_L)$ spanned by these functions.

\noindent
2. As in \cite{klein-molchanov}, it is not difficult to
prove that the functions $\varphi_{j,L}$ are approximate eigenfunctions for $H_L$:
\beq\label{eq:approx-ef1}
\| (H_L - E) \varphi_{j,L} \| \leq \epsilon_L \| \varphi_{j,L} \| .
\eeq
Furthermore, for any $\psi_L \in V_L$, we have $\| (H_L - E) \psi_L \| \leq 2 \epsilon_L \| \psi_L \|$.

\noindent
3. Let $J_L := [ E - 3 \epsilon_L , E+ 3 \epsilon_L ]$. We write $P_L$ for the spectral projector $P_L :=
\chi_{J_L}(H_L)$ and $Q_L := 1 - P_L$ is the complementary projector. For any $\psi \in V_L$, we have
$\| Q_L \psi \| \leq (3 \epsilon_L)^{-1} \| (H_L - E) Q_L \psi \| \leq (2/3) \| \psi \|.$
Since $\|P_L \psi \|^2 = \| \psi \|^2 - \| Q_L \psi \|^2 \geq (5/9) \| \psi \|$,
it follows that $P_L: V_L \rightarrow \ell^2 (\Lambda_L)$ is injective. Consequently, we have
$$
{\rm dim} \Ran P_L = {\rm Tr} (P_L ) \geq {\rm dim} ~V_L = M > m_k.
$$
Redefining the constant $C >0$ in the definition of $\epsilon_L$, we find that
$H$ has at least $m_k + 1$ eigenvalues in $[E- \epsilon_L, E+ \epsilon_L]$.
\end{proof}

The second lemma is a probabilistic one and the proof uses the extended Minami estimate.

\begin{lemma}\label{lemma:local-ef-est2}
Let $I \subset \R$ be a bounded interval. For $q > 2d$,  and any interval
$J \subset I$ with $|J| \leq L^{-q}$, we define the event
\beq\label{eq:event1}
\mathcal{E}_{L,I,q} := \{ \omega ~|~ {\rm Tr} ( \chi_{J} (H_{\omega, L})) \leq m_k ~\forall J \subset I, |J| \leq L^{-q} \} .
\eeq
Then, the probability of this event satisfies
\beq\label{eq:prob1-1}
\P \{ \mathcal{E}_{L,I,q} \} \geq 1 - C_0 L^{2d-q} .
\eeq
\end{lemma}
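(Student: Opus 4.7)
The plan is to reduce the uncountable family of potential "bad intervals" $J \subset I$ with $|J| \le L^{-q}$ to a finite cover, and then apply the extended Minami estimate together with a union bound.

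First I would construct a finite cover of $I$ as follows. Pick a grid of points $\{c_k\}_{k=1}^{N_L}$ inside $I$ with spacing $L^{-q}$, so that $N_L \le C_I L^q$ for a constant depending only on $|I|$. Set $\tilde J_k := [c_k - L^{-q}, c_k + L^{-q}]$. Any subinterval $J \subset I$ with $|J| \le L^{-q}$ lies within distance $L^{-q}/2$ of some grid point $c_k$, and hence is contained in $\tilde J_k$, which has length $2 L^{-q}$. Consequently, if the event $\mathcal{E}_{L,I,q}^c$ occurs, there exists at least one index $k$ such that ${\rm Tr}(\chi_{\tilde J_k}(H_{\omega,L})) \ge m_k + 1$.

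Next I would invoke the extended Minami estimate of \cite{hislop-krishna1} in its general form: for any bounded interval $\tilde J$,
\[
    \P\bigl\{ {\rm Tr}(\chi_{\tilde J}(H_{\omega,L})) \geq m_k + 1 \bigr\} \;\le\; C_M \bigl( |\tilde J|\, L^d \bigr)^2 .
\]
This is the same estimate that yielded Lemma \ref{lemma:minami-consequence1} when $|\tilde J| \sim L^{-d}$; here I simply use it with $|\tilde J_k| = 2L^{-q}$, giving a bound of $4 C_M L^{2(d-q)}$ for each cover element.

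Finally, applying the union bound over the $N_L \le C_I L^q$ cover intervals,
\[
    \P\{\mathcal{E}_{L,I,q}^c\} \;\le\; \sum_{k=1}^{N_L} \P\bigl\{ {\rm Tr}(\chi_{\tilde J_k}(H_{\omega,L})) \ge m_k+1 \bigr\} \;\le\; C_I L^q \cdot 4 C_M L^{2(d-q)} \;=\; C_0 L^{2d-q},
\]
which is the desired estimate (and is nontrivial precisely because $q > 2d$). There is no serious obstacle here; the only subtle point is ensuring that every subinterval of length at most $L^{-q}$ is contained in some element of the finite cover, which is handled by choosing the cover intervals of length $2L^{-q}$ on a grid of spacing $L^{-q}$.
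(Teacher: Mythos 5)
Your proposal is correct and follows essentially the same route as the paper: cover $I$ by $\mathcal{O}(L^{q})$ intervals of length $2L^{-q}$ so that every subinterval of length at most $L^{-q}$ sits inside one of them, apply the extended Minami estimate $\P\{{\rm Tr}(\chi_{\tilde J}(H_{\omega,L}))>m_k\}\leq C_M(|\tilde J|L^{d})^{2}=\mathcal{O}(L^{2(d-q)})$ to each cover element, and conclude by a union bound. The only differences are cosmetic (an explicit grid construction versus the paper's direct count of $2([L^{q}|I|/2]+1)$ subintervals), so nothing further is needed.
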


\begin{proof}
 We cover the interval $I$ by $2 ( [ L^q |I| /2] +1)$ subintervals of length $2 L^{-q}$ so that any subinterval $J$ of length $L^{-q}$ is contained in one of these. We then have
\beq\label{eq:minami-prob1}
\P \{ \mathcal{E}_{L,I,q}^{\rm c} \} \leq (L^q |I| + 2) \P  \{ \chi_J(H_{\omega,L}) > m_k \}.
\eeq
The probability on the right side is estimated from the extended Minami estimate
\beq\label{eq:minami-prob2}
  \P  \{ \chi_J(H_{\omega,L}) > m_k \} \leq C_M (L^{-q} L^d)^2 = C_M L^{2(d-q)},
  \eeq
so that
\beq\label{eq:minami-prob3}
\P \{ \mathcal{E}_{L,I,q}^{\rm c} \} \leq C_M (L^q |I| + 2) L^{2(d-q)} = C_M (|I| +1)L^{2d-q}.
\eeq
This establishes \eqref{eq:prob1-1}.
\end{proof}

\begin{thm}\label{thm:multiplicity1}
Let $H^\omega$ be the generalized Anderson Hamiltonian described in section \ref{sec:introduction} with perturbations $P_i$ having uniform rank $m_k$.  Then the eigenvalues in the localization regime have multiplicity at most $m_k$ with probability one.
\end{thm}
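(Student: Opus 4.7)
The plan is to combine the two preceding lemmas with a Borel--Cantelli argument in the spirit of Klein--Molchanov \cite{klein-molchanov}. Since $\Sigma_{\rm CL}$ can be exhausted by a countable collection of bounded intervals, it suffices to fix one bounded interval $I \subset \Sigma_{\rm CL}$ and show that, with probability one, no eigenvalue of $H^\omega$ in $I$ has multiplicity greater than $m_k$. I will argue by contradiction, showing that the event that some eigenvalue in $I$ has multiplicity exceeding $m_k$ has probability zero.

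First I would pin down the parameters. The region of complete localization supplies, with full probability, eigenfunction decay faster than any polynomial rate: every normalized $H^\omega$-eigenfunction $\varphi$ with eigenvalue in $I$ satisfies $|\varphi(x)| \leq C_\omega \langle x - x_\varphi \rangle^{-\sigma}$ for a localization center $x_\varphi$ and any preassigned $\sigma$. Choose an integer $q > 2d+1$ and then choose $\sigma > q + d/2$. This is the decay exponent needed to invoke Lemma \ref{lemma:local-ef-est1}, and the choice $q > 2d+1$ makes $\sum_{L \in \mathbb{N}} L^{2d - q}$ a convergent series.

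Now suppose, for contradiction, there is an event $\Omega_\star \subset \Omega$ of positive probability on which $H^\omega$ has some eigenvalue $E(\omega) \in I$ of multiplicity at least $m_k + 1$. For any $\omega \in \Omega_\star$ Lemma \ref{lemma:local-ef-est1} applies, and we obtain, for all $L \gg 0$, at least $m_k + 1$ eigenvalues of $H_{\omega, L}$ inside $[E(\omega) - \epsilon_L, E(\omega) + \epsilon_L]$ with $\epsilon_L = C L^{-\sigma + d/2}$. By the choice $\sigma > q + d/2$, we have $2\epsilon_L \leq L^{-q}$ for all sufficiently large $L$, so this small interval is contained in $I$ (once $L$ is large enough that $[E(\omega) - \epsilon_L, E(\omega) + \epsilon_L] \subset I$) and has length at most $L^{-q}$. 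Consequently, on $\Omega_\star$ we have $\omega \in \mathcal{E}_{L,I,q}^{\rm c}$ for every sufficiently large integer $L$.

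On the other hand, Lemma \ref{lemma:local-ef-est2} gives $\P \{ \mathcal{E}_{L,I,q}^{\rm c} \} \leq C_0 L^{2d-q}$, and summability of $\{L^{2d-q}\}$ combined with Borel--Cantelli yields that, almost surely, $\omega \in \mathcal{E}_{L,I,q}$ for all but finitely many $L \in \mathbb{N}$. This contradicts $\P \{ \Omega_\star \} > 0$, so $\P \{ \Omega_\star \} = 0$ after all, and the theorem follows. The main delicate point is parameter juggling: one must choose $\sigma$ large enough (which is permitted only because complete localization gives essentially arbitrary polynomial decay) that the Klein--Molchanov window $[E - \epsilon_L, E+\epsilon_L]$ shrinks faster than $L^{-q}$ while $q$ is still large enough to make the extended Minami bound summable; everything else is bookkeeping.
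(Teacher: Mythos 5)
Your proposal is correct and follows essentially the same route as the paper: combine the deterministic Lemma \ref{lemma:local-ef-est1} with the probabilistic Lemma \ref{lemma:local-ef-est2} and a Borel--Cantelli argument to rule out multiplicity $m_k+1$ on any bounded interval in the localization region. The only cosmetic difference is that the paper runs Borel--Cantelli along the dyadic scales $L_k = 2^k$ (so any $q>2d$ suffices), whereas you sum over all integers $L$ and therefore take $q>2d+1$; both choices work.
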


\begin{proof}
We consider a length scale $L_k = 2^k$. It follows from \eqref{eq:prob1-1} that
the probability of the complementary event $\mathcal{E}_{L_k,I,q}^{\rm c}$
is summable. By the Borel-Cantelli Theorem, that means  for almost every $\omega$ there is a $k(q, \omega)$ so that for all $k > k(q, \omega)$
the event $\mathcal{E}_{L_k,I,q}$ occurs with probability one. Let us suppose that $H^\omega$ an eigenvalue with multiplicity at least $m_k +1$ in an interval $I$ and that the corresponding eigenfunctions decay exponentially. Then, by Lemma \ref{lemma:local-ef-est1}, the local Hamiltonian $H_{\omega, L_k}$ has at least $m_k + 1$ eigenvalues in the interval $[E- \epsilon_L, E+ \epsilon_L]$ where $\epsilon_L = CL^{-(\beta - \frac{d}{2})}$, for any $\beta > 5 d/2$.  This contradicts the event $\mathcal{E}_{L_k,I,q}$ which states that there are no more than $m_k$ eigenvalues in any subinterval $J \subset I$ with $|J| \leq L^{-q}$ since we can find $q > 2d$ so that $\beta - \frac{q}{2} > q$.
\end{proof}

Further investigations on the simplicity of eigenvalues for Anderson-type models may be found in the article by Naboko, Nichols, and Stolz \cite{naboko-nichols-stolz}, Mallick \cite{mallick1}, Mallick and Krishna \cite{mallick-krishna1}, and Mallick and Narayanan \cite{mallick-narayanan1}. Mallick \cite{mallick1} proves that the singular spectrum is simple for a class of Anderson models with higher rank perturbation extending the results of Naboko, Nichols, and Stolz \cite{naboko-nichols-stolz}.
Mallick and Krishna \cite{mallick-krishna1} prove that, for  higher rank Anderson models with the single site potential having support in the whole real line, the Minami estimate implies simplicity of the pure point spectrum away from the continuous spectrum. They a also show that in the case of higher multiplicity spectrum the spectral statistics cannot be Poisson but must be compound Poisson. Mallick and Narayanan \cite{mallick-narayanan1} prove that higher rank models on some graphs have eigenvalues of higher multiplicity.


\section{Decorrelation estimates for the discrete alloy-type model}\label{sec:alloy-type1}
\setcounter{equation}{0}

In this section, we prove decorrelation estimates for the nonsign definite alloy model studied by Tautenhahn and Veseli\`c \cite{tautenhahn-veselic1}. As above, these imply the asymptotic independence of local eigenvlaue statistics associated with two energies in the localization regime sufficiently far apart. The discrete random Schr\"odinger operator
acting on $\ell^2(\Z^d)$ is described by
\begin{equation}\label{eq:alloy-type-model1}
H_\omega=\mathcal{L} +   V_\omega, 
\end{equation}
where $\mathcal{L}$ is the finite-difference Laplacian, and the random potential $V_\omega$ is defined by
\beq\label{eq:alloy-type-pot1}
V_\omega (m) := \sum_{n \in \Z^d} \omega_n a_{m-n}.
\eeq
The potentials at two sites, $V_\omega (m)$ and $V_\omega (n)$, are independent only if $\| n-m\| > {\rm diam} ~a$.
 Furthermore, the rank of $V_\omega (m)$ is $| \supp ~a|$. The single-site potential $a$ and random variables $(\omega_k)$ satisfy the following hypotheses.


\vspace{.1in}
\noindent
{\bf Hypothesis 1.}
\emph{The single-site potential $a$ is a real, compactly supported function $a: \Z^d \rightarrow \R$
with $a_0>0$ satisfying the condition
\beq\label{eq:alloy-type-ssite1}
 0 < \sum_{n \in \Z^d \backslash \{0 \}} ~ | a_n| \leq a_0 .
\eeq
}

\vspace{.1in}
\noindent
Given a single-site potential $a$, we define a parameter $\delta \geq 0$ by:
\beq\label{eq:delta1}
\delta:= \dfrac{\underset{m\neq 0}{\sum}|a_m|}{a_0}<1,
\eeq

\vspace{.1in}
\noindent
{\bf Hypothesis 2.}
\emph{The single-site potential $a$ is such that the parameter $\delta > 0$.}

\vspace{.1in}
\noindent
The Fourier transform $\hat{a}: \T^d = [0, 2\pi)^d \rightarrow \C$, is defined by
$$
\hat{a}(\theta) := \sum_{k \in \Z^d} ~e^{i \theta \cdot k} a_k, ~~~ \theta \in \T^d,
$$

\vspace{.1in}
\noindent
{\bf Hypothesis 3.}
\emph{The Fourier transform $\hat{a}$ of the single-site potential $a$ is never zero: $\hat{a}(\theta) \neq 0$, for all $\theta \in \T^d$.}

\vspace{.1in}
\noindent
{\bf Hypothesis 4.}
\emph{The family of random variables $(\omega_m)$ are iid random variables with a common, compactly supported density $\rho \in W^{2,1}(\R)$ with support $\rho \subset [-M, M]$, for some $0 < M < \infty$.}

We note that the usual rank one Anderson model corresponds to $a_m = a_0 \delta_{m0}$ so $a$ is supported at a single point and $\delta = 0$. In the case considered here, we will always assume that $\delta > 0$ and the single site potential $a$ has compact support. In particular, there is no restriction on the sign of the terms $a_m$.


Let us write
\beq\label{eq:alloy-cnst1}
m := \sum_n a_n\geq a_0(1-\delta)> 0 .
\eeq
It follows from standard methods that the almost sure spectrum of $H_\omega$ is equal to $[-2d,2d]+ m \cdot\text{supp }\omega_0$. In particular, the almost sure spectrum is a union of intervals and contains at least two intervals $I_1,I_2$ such that $\text{dist}(I_1,I_2) > 4d+mc$, for some $0<c\leq 2M$, only depending on $\supp \omega_0$. We always assume that the constant $(M, c, \delta)$ satisfy the condition
\begin{equation}\label{eq:costant-constraint1}
cM^{-1}(1-\delta)^2-2\delta(1+\delta)>0 .
\end{equation}
Under this condition, we extract from \cite[Corollary 3.4]{tautenhahn-veselic1} the following Minami estimate
(M): There exists $C>0$ such that for all interval $I \in \R$, we have
\begin{equation}\label{eq:alloy-type-minami1}
\P\left({\rm Tr} ~ X_\ell(I) \geq 2\right)\leq C |I|^2\ell^2
\end{equation}

Although not explicitly stated in \cite{tautenhahn-veselic1}, the Minami estimate \eqref{eq:alloy-type-minami1}
and the method of Klein-Molchanov \cite{klein-molchanov}, presented in section \ref{sec:multiplicity1},
allow us to prove that the eigenvalues of the alloy model \eqref{eq:alloy-type-model1}--\eqref{eq:alloy-type-pot1} are almost surely simple. So although the rank of $a$ is greater than one, the standard Minami estimate holds implying simplicity of the eigenvalues in the localization regime and Poisson statistics.

We now turn to the proof of the decorrelation estimates, Proposition \ref{prop:decorrelation-prop1}, for the random alloy model assuming \eqref{eq:costant-constraint1}. Because of the Minami estimate \eqref{eq:alloy-type-minami1}, we may take $m_k = 1$.

We take $E,E' \in \Sigma_{\rm CL}$ and such that $|E-E'|>4d$. We may restrict ourselves to those configurations  $\omega$ such there is one eigenvalue in $I_L(E)$ and one in $J_L(E')$ and such that the distance with the rest of the spectrum of $H_{\omega,\ell}$ is greater than $(L \log L)^{-d}$. By the Wegner estimate, this is possible with probability greater than $1 - (\ell /(L \log L))^{d}$. Let us write $E_j^\ell(\omega)$ and $E_k^\ell(\omega)$ these two eigenvalues with normalized eigenvectors $u_j^\ell$ and $u_k^\ell$. We note that $u_j^\ell(m), u_k^\ell (m) =0$, if $m\notin \Lambda_\ell$.
The results of section \ref{subsec:weight-trace1} hold with $k_1 = 1$, $k_2 = 1$, and $\mathcal{T}_\ell (E, 1,\omega)= E_j^\ell(\omega)$ and
$\mathcal{T}_\ell (E^\prime, 1,\omega)= E_k^\ell(\omega)$.

The first main difference appears in the variational formulas of section \ref{subsec:variation1}, in particular, the lower bound \eqref{eq:variation5}.
In the alloy case, the gradients of the eigenvalues are not normalized.
We prove the following lower bound:

\begin{lemma}\label{lemma:alloy-gradient1}
There exists a finite constant $K>0$, depending only on $M = \sup \omega_0$, and $\delta$ defined in \eqref{eq:delta1}, such that
\begin{equation*}
\left\|\dfrac{\nabla_\omega E_j^\ell(\omega)}{\|\nabla_\omega E_j^\ell (\omega) \|_1}-\dfrac{\nabla_\omega E_k^\ell (\omega)}{\|\nabla_\omega E_k^\ell(\omega) \|_1} \right\|_1\geq K
\end{equation*}
\end{lemma}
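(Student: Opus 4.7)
The plan is to first compute $\nabla_\omega E_j^\ell$ explicitly via the Hellmann--Feynman formula, then derive uniform two-sided bounds on $\|\nabla_\omega E_j^\ell\|_1$, then obtain a lower bound on the unnormalized difference $\|\nabla_\omega E_j^\ell - \nabla_\omega E_k^\ell\|_1$ exactly as in Section \ref{subsec:variation1}, and finally pass to the normalized difference by a two-case argument. The main obstacle will be the last step, because the two normalizing scalars $\alpha := \|\nabla_\omega E_j^\ell\|_1$ and $\beta := \|\nabla_\omega E_k^\ell\|_1$ are generally different and must be controlled against the size of $\|\nabla_\omega E_j^\ell - \nabla_\omega E_k^\ell\|_1$ itself.

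First, by Hellmann--Feynman, $(\nabla_\omega E_j^\ell)_n = \sum_{m \in \Lambda_\ell} a_{m-n} |u_j^\ell(m)|^2$. Summing over $n \in \Z^d$ and using $\|u_j^\ell\| = 1$ yields the key identity
\[
\sum_{n \in \Z^d} (\nabla_\omega E_j^\ell)_n = \Big(\sum_k a_k\Big)\|u_j^\ell\|^2 = m > 0,
\]
with $m = a_0(1-\delta)$ from \eqref{eq:alloy-cnst1}. This forces $\|\nabla_\omega E_j^\ell\|_1 \geq m$, while the triangle inequality inside the sum produces the upper bound $\|\nabla_\omega E_j^\ell\|_1 \leq \|a\|_1 \leq a_0(1+\delta)$. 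Identical bounds hold for $E_k^\ell$.

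Second, repeating the Euler-type identity of Section \ref{subsec:variation1},
\[
\omega \cdot \nabla_\omega E_j^\ell = \langle u_j^\ell, V_\omega u_j^\ell\rangle = E_j^\ell - \langle u_j^\ell, \mathcal{L} u_j^\ell\rangle,
\]
and using $\|\mathcal{L}\| \leq 2d$ together with $|E_j^\ell - E_k^\ell| \geq \Delta E - \mathcal{O}(L^{-d})$, I obtain
\[
|\omega \cdot (\nabla_\omega E_j^\ell - \nabla_\omega E_k^\ell)| \geq \Delta E - 4d - \mathcal{O}(L^{-d}).
\]
Since $\|\omega\|_\infty \leq M$, this gives, for $L$ large enough,
\[
\|\nabla_\omega E_j^\ell - \nabla_\omega E_k^\ell\|_1 \geq \gamma := \tfrac{1}{2M}(\Delta E - 4d) > 0.
\]

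The last and trickiest step passes from this unnormalized bound to the normalized statement. Since the $\ell^1$ norm dominates the absolute value of the sum of components, the equality $\sum_n (\nabla_\omega E_j^\ell)_n = \sum_n (\nabla_\omega E_k^\ell)_n = m$ yields the first lower bound
\[
\Big\|\frac{\nabla_\omega E_j^\ell}{\alpha} - \frac{\nabla_\omega E_k^\ell}{\beta}\Big\|_1 \geq \frac{m\,|\alpha - \beta|}{\alpha\beta},
\]
while a triangle inequality applied to $\nabla_\omega E_j^\ell/\alpha - \nabla_\omega E_k^\ell/\beta = (\nabla_\omega E_j^\ell - \nabla_\omega E_k^\ell)/\alpha + \nabla_\omega E_k^\ell(1/\alpha - 1/\beta)$ gives a second lower bound
\[
\Big\|\frac{\nabla_\omega E_j^\ell}{\alpha} - \frac{\nabla_\omega E_k^\ell}{\beta}\Big\|_1 \geq \frac{\|\nabla_\omega E_j^\ell - \nabla_\omega E_k^\ell\|_1}{\alpha} - \frac{|\alpha - \beta|}{\alpha}.
\]
I then split on the size of $|\alpha - \beta|$: if $|\alpha - \beta| \geq \gamma/2$, the first bound gives at least $m\gamma/(2\alpha\beta) \geq m\gamma/(2a_0^2(1+\delta)^2)$; if $|\alpha - \beta| < \gamma/2$, the second gives at least $\gamma/(2\alpha) \geq \gamma/(2a_0(1+\delta))$. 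Either way the lower bound is a strictly positive constant depending only on $a_0$, $\delta$, $M$ and $\Delta E - 4d$, which is the conclusion of the lemma.
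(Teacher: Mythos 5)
Your proof is correct, and while steps one through three (Hellmann--Feynman formula, the two-sided bound $a_0(1-\delta)\le\|\nabla_\omega E^\ell\|_1\le a_0(1+\delta)$, and the lower bound on $\|\nabla_\omega E_j^\ell-\nabla_\omega E_k^\ell\|_1$ via $\omega\cdot\nabla_\omega E^\ell=E^\ell-\langle u,\mathcal{L}u\rangle$) coincide with the paper's, your final normalization step is genuinely different. The paper uses the single decomposition $\frac{u}{\alpha}-\frac{v}{\beta}=\frac{\beta(u-v)+(\beta-\alpha)v}{\alpha\beta}$ and a reverse triangle inequality, so its lower bound is a \emph{difference} of two terms, $\|u-v\|_1\beta-|\alpha-\beta|\beta$ over $\alpha\beta$; positivity of that difference is exactly what the standing constraint \eqref{eq:costant-constraint1} on $(M,c,\delta)$ is invoked for. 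You instead split on the size of $|\alpha-\beta|$ relative to $\|u-v\|_1$, and in the regime where $|\alpha-\beta|$ is large you exploit the extra identity $\sum_n(\nabla_\omega E_j^\ell)_n=\sum_n(\nabla_\omega E_k^\ell)_n=m>0$ (which the paper records only after the lemma) to get $\|\frac{u}{\alpha}-\frac{v}{\beta}\|_1\ge m|\alpha-\beta|/(\alpha\beta)$. The payoff is that your version of the lemma needs no smallness condition tying $\delta$ to $c$ and $M$ -- condition \eqref{eq:costant-constraint1} is still needed elsewhere (for the Minami estimate of Tautenhahn--Veseli\'c), but not for this gradient separation -- at the modest cost of a constant that is not scale-invariant in $a$. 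Two trivial slips: $m:=\sum_n a_n$ is only bounded below by $a_0(1-\delta)$, not equal to it, and your lower bound $\|\nabla_\omega E_j^\ell\|_1\ge m$ is correct but the sharper two-sided bound $a_0(1\pm\delta)$ you also state is what you actually use; neither affects the argument.
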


%
%
%
\begin{proof}
By the Feynman-Hellmann formula we have
\beq
\partial_{\omega_n} E_j^\ell (\omega) = \sum_{m\in\Z^d} a_{m} |u_j^\ell(m+n)|^2
\eeq
from which it follows that
\begin{equation}
\left|\partial_{\omega_n} E_j^\ell (\omega) - a_0 |u_j^\ell(n)|^2\right|\leq \sum_{m\neq 0} |a_m||u_j^\ell(n+m)|^2 .
\end{equation}
This implies that the $L^1$-norm of the gradient of $E_j^\ell(\omega)$ satisfies
\begin{equation}
\left|\| \nabla_\omega E_j^\ell (\omega)\|_1\ - a_0\right|  \leq \sum_{n\in\Z^d}\sum_{m\neq 0} |a_m||u_j^\ell(n+m)|^2\leq \sum_{m\neq 0}|a_m|.
\end{equation}
Therefore, one has
\begin{equation}
a_0(1-\delta)\leq \| \nabla_\omega E_{j}^\ell (\omega)\|_1 ,  \| \nabla_\omega E_{k}^\ell (\omega)\|_1 \leq a_0(1+\delta),
\end{equation}
and
\begin{equation}
\big|\|\nabla_\omega E_j^\ell (\omega)\|_1-\|\nabla_\omega E_k^\ell (\omega) \|_1\big|\leq 2\sum_{m\neq 0} |a_m|\leq 2\delta a_0.
\end{equation}
It also follows from the Feynman-Hellmann formula that
\begin{align*}
\omega\cdot(\nabla_\omega E_j^\ell (\omega)-\nabla_\omega E_k^\ell (\omega))&=([\Delta- E_j^\ell (\omega)] u_j,u_j)-([\Delta - E_k^\ell (\omega)] u_k,u_k)\\
&=(\Delta u_j,u_j)-(\Delta u_k,u_k) + (E_k^\ell (\omega)-E_j^\ell (\omega)),
\end{align*}
so that
\begin{align*}
M \|\nabla_\omega E_j^\ell (\omega) - \nabla_\omega E_k^\ell (\omega) \|_1 & \geq |\omega\cdot(\nabla_\omega E_j^\ell (\omega) - \nabla_\omega E_k^\ell (\omega))| \geq |E-E'| - 4d > mc ,
\end{align*}
where $m > 0$ is defined in \eqref{eq:alloy-cnst1}.
We can now finally estimate
%
\bea\label{eq:alloy-grad11}
\lefteqn{ \dfrac{\nabla_\omega E_j^\ell (\omega) }{ \|\nabla_\omega E_j^\ell (\omega) \|_1} - \dfrac{\nabla_\omega E_k^\ell (\omega)}{\|\nabla_\omega E_k^\ell (\omega) \|_1} } \nonumber \\
 & = & \dfrac{\|\nabla_\omega E_k^\ell (\omega)\|_1 \nabla_\omega E_j^\ell (\omega) - \| \nabla_\omega E_j^\ell (\omega) \|_1 \nabla_\omega E_k^\ell (\omega)}{\|\nabla_\omega E_j^\ell (\omega) \|_1\|\nabla_\omega E_k^\ell (\omega) \|_1} \nonumber \\
 & = & \dfrac{\|\nabla_\omega E_k^\ell (\omega)\|_1\left[\nabla_\omega E_j^\ell (\omega) - \nabla_\omega E_k^\ell (\omega)\right]+\left[\|\nabla_\omega E_k^\ell (\omega)\|_1-\|\nabla_\omega E_j\|_1\right]\nabla_\omega E_k^\ell (\omega)}{\|\nabla_\omega E_j^\ell (\omega) \|_1\|\nabla_\omega E_k^\ell (\omega)\|_1}, \nonumber \\
 & &
\eea
%
so that
%
\bea\label{eq:alloy-grad-lb1}
\lefteqn{ \left\| \dfrac{\nabla_\omega E_j^\ell (\omega)}{\|\nabla_\omega E_j^\ell (\omega)\|_1} - \dfrac{\nabla_\omega E_k^\ell (\omega)}{\|\nabla_\omega E_k^\ell (\omega)\|_1} \right\|_1}  \nonumber \\
 & \geq & \dfrac{\|\nabla_\omega E_j^\ell (\omega) - \nabla_\omega E_k^\ell (\omega)\|_1 \|\nabla_\omega E_k^\ell (\omega)\|_1-\big|\|\nabla_\omega E_k^\ell (\omega)\|_1-\|\nabla_\omega E_j^\ell (\omega) \|_1\big|\|\nabla_\omega E_k^\ell (\omega)\|_1 }{\|\nabla_\omega E_j^\ell (\omega) \|_1 \|\nabla_\omega E_k^\ell (\omega)\|_1} \nonumber \\
&\geq & \dfrac{mcM^{-1}(1-\delta)-2\delta(1+\delta)a_0}{(1+\delta)^2} \nonumber \\
&\geq & a_0\dfrac{cM^{-1}(1-\delta)^2-2\delta(1+\delta)}{(1+\delta)^2}>0 ,
\eea
giving an explicit formula for the constant $K>0$ in the lemma.
\end{proof}

We also compute
\begin{equation}
\sum_{n\in\Z^d} \partial_{\omega_n} E_{j}^\ell (\omega) = \sum_{n\in\Z^d} \partial_{\omega_n} E_{k}^\ell (\omega) = m>0 ,
\end{equation}
for the constant $m > 0$ defined in \eqref{eq:alloy-cnst1}, and
\begin{equation}
\sum_{n\in\Z^d} \dfrac{\partial_{\omega_n} E_j^\ell (\omega)}{\|\nabla_{\omega_n} E_j^\ell (\omega)\|_1}+\dfrac{\partial_{\omega_n} E_k^\ell (\omega)}{\|\nabla_\omega E_k^\ell (\omega)\|_1}\geq \dfrac{2m}{a_0(1-\delta)}\geq 2.
\end{equation}
Therefore, it follows that
\begin{equation}
\left\|\dfrac{\nabla_\omega E_j^\ell (\omega)}{\|\nabla_\omega E_j^\ell (\omega)\|_1}+\dfrac{\nabla_\omega E_k^\ell (\omega)}{\|\nabla_\omega E_k^\ell (\omega)\|_1} \right\|_1\geq 2 .
\end{equation}

To complete the proof of the decorrelation estimate \eqref{eq:prob1}, we note that the reduction of section \ref{sec:proof-prop1} holds for the alloy-type model. It remains for us to prove the analog of Proposition \ref{prop:vanishing-prob1} for the alloy-type model.

\begin{pr}\label{lemma:alloy-vanishing-prob1}
Let $E, E^\prime \in \Sigma_{\rm CL}$ be two distinct energies with $|E - E^\prime | > 4d$. For any bounded intervals $I,J \subset \R$,
we define $I_L(E) := L^{-d} I + E$ and $J_L(E^\prime) := L^{-d} J + E^\prime$, as above. We write  $X_\ell (I_L(E) := {\rm Tr} E_{H_{\omega, \ell}} (I_L(E))$, and similarly $X_\ell (J_L(E^\prime)$. Then, for any $\beta > \frac{5}{2}$, there exists a scale $L_0 > 0$, so that for any
 $L > L_0$, there exists a constant $C_0> 0$ so that
\beq\label{eq:alloy-vanishing-prob1}
\P \{ (X_\ell(I_L(E))  = 1 ) \cap (X_\ell(J_L(E^\prime)) = 1  ) \} \leq C_0 \frac{K \| \rho \|_\infty^2}{\Delta E - 4d}  \frac{ (C \log L)^{(2+\beta) d}}{L^{2 d}} .
\eeq
\end{pr}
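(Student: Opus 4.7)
The plan is to follow the argument of Proposition \ref{prop:vanishing-prob1} specialized to $k_1 = k_2 = m_k = 1$: the weighted traces $\mathcal{T}_{\tilde{\ell}}(E,k_1)$ and $\mathcal{T}_{\tilde{\ell}}(E',k_2)$ are replaced by the individual eigenvalues $E_j^\ell(\omega) \in I_L(E)$ and $E_k^\ell(\omega) \in J_L(E')$, while the lower bound \eqref{eq:variation5} is replaced by the normalized-gradient inequality of Lemma \ref{lemma:alloy-gradient1}. The three ingredients I need are a gradient/Jacobian partition argument, the Harnack--diffeomorphism estimate on each pair event, and a combinatorial sum over pairs.

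As in section \ref{subsec:smallest-length-scale1} I would introduce
\[
\Omega_0(\tilde{\ell}, 1, 1) := \{\omega \,:\, X_{\tilde{\ell}}(\tilde{I}_L(E)) = 1,\ X_{\tilde{\ell}}(\tilde{J}_L(E')) = 1\} \cap \Omega_0,
\]
and, for each pair $(i,j) \in \Lambda_{\tilde{\ell}} \times \Lambda_{\tilde{\ell}} \setminus \Delta$, the Jacobian $J_{ij}$ of the map $\varphi(\omega_i,\omega_j) := (E_j^\ell(\omega), E_k^\ell(\omega))$ together with the restricted event $\Omega_0^{i,j}(\tilde{\ell}, 1, 1) := \Omega_0(\tilde{\ell}, 1, 1) \cap \{|J_{ij}| \geq \lambda(L)\}$, with threshold $\lambda(L) := (\Delta E - 4d)K^{-1}(C \log L)^{-\beta d}$. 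I would partition $\Omega_0(\tilde{\ell}, 1, 1)$ according to whether $\max_{i \ne j}|J_{ij}|$ exceeds $\lambda(L)$. To exclude the small-Jacobian case I would combine an analog of \eqref{eq:jac-lb1}, applied to the normalized gradients, with the separation bound of Lemma \ref{lemma:alloy-gradient1}, so that at scale $\tilde{\ell} = C \log L$ and for $\beta > 5/2$ the event $\{\max_{i\ne j}|J_{ij}| < \lambda(L)\}$ becomes contradictory for all $L$ sufficiently large, hence has zero probability inside $\Omega_0(\tilde{\ell},1,1)$.

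On the remaining piece I would estimate each $\P\{\Omega_0^{i,j}(\tilde{\ell}, 1, 1)\}$ exactly as in Lemma \ref{lemma:diffeom-new1}: extract a compact subset within $O(L^{-2d})$ probability, apply the Inverse Function Theorem on the set $\{|J_{ij}| \geq \lambda(L)\}$ to obtain a finite disjoint family of diffeomorphic images of subsets of $\tilde{I}_L(E) \times \tilde{J}_L(E')$, and control the cardinality of each fiber by Harnack's Curve Theorem applied to $f_{e,e'}(\omega_i, \omega_j) := [\det(H_{\omega,\tilde{\ell}} - e)]^2 + [\det(H_{\omega,\tilde{\ell}} - e')]^2$. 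In the present setting $m_k^{2m_k} = 1$, and the fibers are generically singletons. The change-of-variables bound $|\mathrm{Jac}\,\varphi^{-1}| \leq K(C\log L)^{\beta d}/(\Delta E - 4d)$, the target-volume factor $|\tilde{I}_L(E) \times \tilde{J}_L(E')| \leq CL^{-2d}$, and the combinatorial factor $\tilde{\ell}^{2d} = (C\log L)^{2d}$ from summation over pairs $(i,j)$, together yield the claimed bound \eqref{eq:alloy-vanishing-prob1}.

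The principal obstacle is the adaptation of the combinatorial inequality \eqref{eq:jac-lb1}, which in \cite{klopp} is stated for nonnegative $\ell^1$-normalized vectors. In the alloy model the partial derivatives $\partial_{\omega_n} E_j^\ell = \sum_m a_m |u_j^\ell(n+m)|^2$ need not be nonnegative, since $a$ is not sign-definite. This is precisely the point where the structural hypothesis $cM^{-1}(1-\delta)^2 - 2\delta(1+\delta) > 0$ of \eqref{eq:costant-constraint1} enters: it guarantees both the quantitative gradient-separation $K > 0$ in Lemma \ref{lemma:alloy-gradient1} and the coherence lower bound $\|\nabla_\omega E_j^\ell/\|\nabla_\omega E_j^\ell\|_1 + \nabla_\omega E_k^\ell/\|\nabla_\omega E_k^\ell\|_1\|_1 \geq 2$ displayed just before the proposition. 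The latter forces the two normalized gradients to have compatible coordinate-wise signs to leading order, and should allow one to reduce the signed minor bound to the nonnegative case of Klopp with only a constant loss absorbed into $K$, thereby recovering the polynomial $\tilde{\ell}^{-5d}$ loss needed in the contradiction step.
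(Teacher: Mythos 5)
Your proposal follows essentially the same route as the paper: specialize the argument of Proposition \ref{prop:vanishing-prob1} to $k_1=k_2=1$, replace the lower bound \eqref{eq:variation5} by the normalized-gradient separation of Lemma \ref{lemma:alloy-gradient1} in the Klopp minor inequality, and rerun the Harnack--diffeomorphism estimate of Lemma \ref{lemma:diffeom-new1}; your observation that the signed version of \eqref{eq:jac-lb1} needs the coherence bound $\bigl\|\nabla_\omega E_j^\ell/\|\nabla_\omega E_j^\ell\|_1+\nabla_\omega E_k^\ell/\|\nabla_\omega E_k^\ell\|_1\bigr\|_1\geq 2$ is exactly why the paper records that inequality just before the proposition. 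One small correction: the fibers $\varphi^{-1}(p)$ are not singletons — since each $\omega_n$ enters the potential at $|\supp a|$ sites, $\det(H_{\omega,\tilde{\ell}}-E)$ has degree $|\supp a|$ (not $1$) in each variable, so Harnack gives at most $|\supp a|^2$ preimage points — but this count is still independent of $L$ and is simply absorbed into $C_0$, so the argument is unaffected.
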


Lemma \ref{lemma:alloy-gradient1} allows us to write the analog of
\bea\label{eq:-alloy-jac-lb2}
\max_{i \neq j \in \Lambda_\ell} J_{ij} ( E_j^\ell (\omega), {E}_k^\ell (\omega))^2  &\geq & \left( \frac{2^3}{ \ell^{5d}} \right)
 \left\| \frac{\nabla_\omega E_j^\ell (\omega)}{\|\nabla_\omega E_j^\ell (\omega)\|_1} -  \frac{\nabla_\omega E_k^\ell (\omega)}{\|\nabla_\omega E_k^\ell (\omega)\|_1 } \right\|_1^2 \nonumber \\
 & \geq &
K^2 \left( \frac{2^3}{ \ell^{5d}} \right) ,
\eea
where $K >0$ is the constant defined in \eqref{eq:alloy-grad-lb1}. Consequently, an estimate of the form \eqref{eq:gradients1} holds for the alloy-type model, and the probability that the normalized gradients are collinear is zero.

With regard to Lemma \ref{lemma:diffeom-new1}, we mention that because of the support of the single-site function $a$, the determinant $f_E(\omega_j, \omega_k) = \det( H_{\omega,\ell} - E)$ is a polynomial of degree $| \supp ~a|$ in each random variable $\omega_j$ and $\omega_k$.
Hence, the Harnack Curve Theorem states that the number of connected components in the zero set of $f_E$ is bounded above by $| \supp ~a|^2$. By the argument in the proof of Lemma \ref{lemma:diffeom-new1}, the number of points in $\varphi^{-1} (p)$, for any $p \in K$, is bounded above by $| \supp ~a|^2$. As this number is independent of $L$, the proof concludes as in section \ref{subsec:joint-prob1}.

\end{document}